\documentclass[11pt,a4paper]{article}
\usepackage{stmaryrd}
\usepackage{amsmath,amssymb,amsthm}
\usepackage{cite}
\usepackage{color}
\usepackage{hyperref}

\textwidth=16cm
\textheight=22cm
\oddsidemargin=-0.0pt
\evensidemargin=0.0pt
\topmargin=0pt
\overfullrule=0pt

\newtheorem{thm}{Theorem}[section]
\newtheorem{cor}[thm]{Corollary}
\newtheorem{lem}[thm]{Lemma}
\newtheorem{prop}[thm]{Proposition}
\theoremstyle{definition}

\theoremstyle{remark}

\numberwithin{equation}{section}

\newcommand{\bse}{\begin{subequations}}
\newcommand{\ese}{\end{subequations}}

\newcommand{\llb}{\llbracket}
\newcommand{\rlb}{\rrbracket}
\newcommand{\lpt}{\llparenthesis}
\newcommand{\rpt}{\rrparenthesis}
\newcommand{\lpb}{\{}
\newcommand{\rpb}{\}_\theta}

\def \t#1{\widetilde{#1}}

\def \c#1{\mathcal{#1}}


\title{Integrability properties of the dispersionless \\  Kadomtsev-Petviashvili hierarchy}

\author{
Wei Fu$^{1}$\footnote{E-mail address: wfu@shu.edu.cn},~~
R. Ilangovane$^{2}$,~~
K.M. Tamizhmani$^{2}$,~~
Da-jun Zhang$^{1}$\footnote{Corresponding author. E-mail address: djzhang@staff.shu.edu.cn}\\
{\small \it $^{1}$Department of Mathematics, Shanghai University, Shanghai 200444, P.R.China}\\
{\small \it $^{2}$Department of Mathematics, Pondicherry University, Puducherry  605014, India}
}

\date{\today}

\begin{document}

\maketitle

\begin{abstract}
In the paper we investigate integrability characteristics for the dispersionless Kadomtsev-Petviashvili hierarchy.
These characteristics include symmetries, Hamiltonian structures and conserved quantities.
We give a Lax triad to construct a master symmetry and a hierarchy of non-isospectral dispersionless Kadomtsev-Petviashvili flows.
These non-isospectral flows, together with the known isospectral dispersionless Kadomtsev-Petviashvili flows,
form a Lie algebra, which is used to derive two sets of symmetries for the isospectral dispersionless Kadomtsev-Petviashvili hierarchy.
By means of the master symmetry, symmetries, Noether operator and  conserved covariants, Hamiltonian structures
are constructed for both isospectral and non-isospectral dispersionless Kadomtsev-Petviashvili hierarchies.
Finally, two sets of conserved quantities and their Lie algebra are derived for the isospectral dispersionless Kadomtsev-Petviashvili hierarchy.
\vskip 6pt
\noindent{\textbf{Keywords:}} dispersionless Kadomtsev-Petviashvili hierarchy, symmetries, Hamiltonian structures, conserved quantities, Lie algebras.\\
\noindent{\textbf{PACS:}} 02.30.Ik
\end{abstract}

\section{Introduction}\label{sec:intro}

Compared with the  Kadomtsev-Petviashvili (KP) equation
\begin{align}\label{eq:KP}
u_{t}=\frac{1}{4}u_{xxx}+3uu_x+\frac{3}{4}\partial^{-1}_x u_{yy},
\end{align}
the dispersionless Kadomtsev-Petviashvili (dKP) equation
\begin{align}\label{eq:dKP}
U_{T}=3UU_X+\frac{3}{4}\partial^{-1}_XU_{YY}
\end{align}
is lack of the dispersion term $U_{XXX}$.
This kind of systems were first introduced in the study of Benney equations\cite{LM,Zak}.
The dKP hierarchy was derived by Kodama and Gibbons \cite{KG} via the Lax representations of the equations with a special
Poisson bracket in the consideration of the results in \cite{KM1,KM2}. They also produced a class of exact solutions including rarefaction 
waves and shock waves for the dKP equation \cite{Kod} as well as constructed  an infinite number of exact solutions
for the whole hierarchy using hodograph transform \cite{KG}.
At the same time, it was found by Krichever that the dKP hierarchy could be considered as a particular case of the general Whitham hierarchy \cite{Kri1}.
Then the  dKP hierarchy was applied to study the perturbed chiral primary rings of the topological minimal models \cite{Kri2}.
Motivated by Krichever's treatment in some sense, Takasaki and Takebe
reviewed the dKP hierarchy from the view point of Sato' approach and dispersionless limit\cite{TT1,TT2}.
Then, a series of literatures reported the relations of  dispersionless equations and topological field theory and so forth\cite{Dub,Car,Kri3,AK,TT3}.
Besides, the dKP equation or its hierarchy were also investigated from the aspects of
the Miura map \cite{CT}, reductions\cite{MAM1,MAM2}, inverse scattering transform\cite{MS} and so on.

The present paper aims to investigate integrability properties for the dKP hierarchy,
such as infinitely many symmetries, Hamiltonian structures and conserved quantities.
Let us recall the case of KP hierarchy.
Unlike the (1+1)-dimensional integrable systems which possess recursion operators with implectic-symplectic structure\cite{FF2,ZC1},
for the KP hierarchy, an easier way is to make use of its master symmetry\cite{Fuch}.
In this approach, the master symmetry was used as a flows generater to build
implicit recursion relations for the KP hierarchy,
which were then used to construct Hamiltonians, symmetries and conserved quantities\cite{OF,CLL,Cas1,Cas2}.
This approach was re-described systematically for a semi-discrete KP hierarchy in \cite{FHTZ} very recently.

In fact, master symmetries can be derived as integrable non-isospectral flows.
Unlike the traditional treatment for the single isospectral KP hierarchy
where one always takes $x\equiv t_1$ and $y\equiv t_2$ (see \cite{OSTT,CXZ}),
if we consider isospectral and non-isospectral cases together,
$x$ and $y$ have to be completely independent of $t_j$.
Therefore Lax triads are necessary in deriving the isospectral and non-isospectral KP hierarchies
because one has to consider $x$, $y$ and $t_j$ separately.
A detailed description for Lax triad approach can be found in \cite{FHTZ}.

The isospectral dKP hierarchy can be derived \cite{KG} from the following pseudo-polynomial of $P$,
\begin{align}\label{cL}
\c L=P+\sum_{j=1}^{\infty}U_{j+1}P^{-j}.
\end{align}
It is also known that the whole isospectral dKP hierarchy are related to the usual isospectral KP hierarchy
via certain limit procedure\cite{CT}.
In the present paper, to construct a set of non-isospectral dKP flows,
we will first examine the limit procedure for the Lax triad of the isospectral KP hierarchy.
Then, by imposing the same limit procedure on the non-isospectral KP case,
we find a triad for constructing the non-isospectral dKP flows,
and one of the non-isospectral dKP flows will act as the master symmetry.
Then the master symmetry is used to investigate symmetries,
Hamiltonian structures and conservation laws for the whole isospectral dKP hierarchy.
The non-isospectral dKP hierarchy are also shown to have Hamiltonian structures.

This paper is organized as follows. In Sec. \ref{sec:prelim} we introduce some basic notions
and Lax triads for deriving the isospectral and non-isospectral KP hierarchies.
In Sec. \ref{sec:hierar}, we derive both the isospectral and non-isospectral dKP hierarchies via the Lax representations.
 Sec. \ref{sec:alg} investigates algebraic relations of flows
 and constructs two sets of symmetries for the isospectral dKP hierarchy.
Then in Sec. \ref{sec:Hamilt} we investigate Hamiltonian structures and conserved quantities.

\section{Preliminary}\label{sec:prelim}

\subsection{Basic notions}\label{subsec:notat}

In this section we will briefly recall some notions of integrability characteristics.
Some notions considered in this section may depend in $C^{\infty}$-way in time parameter $t$.
For time-independent case, we only need to ignore the partial derivative term of those integrability characteristics with respect to time parameter $t$
and they will reduce to the notions introduced in \cite{FF2,Mag}.

For a given evolution equation
\begin{align}\label{def:eq}
u_t=K(u),~~u\in\c M,
\end{align}
where $\c M$ is an infinite dimensional linear manifold of $C^{\infty}$ functions $u(x,y)$ defined on $\mathbb{R}^2$
and vanishing rapidly at infinity, and $K$ is a vector field on $\c M$.
Here we require the solution of \eqref{def:eq} depending in $C^{\infty}$-way on time parameter $t$.
Because $\c M$ is linear, all fibers of the tangent bundle $\mathrm{T}{\c M}$ are copies of
the same vector space $\c S$, i.e., we can treat $\c M=\c S$.
However, it is convenient to regard them as different objects for a better geometrical understanding.
By $\c S^*$ we denote the dual space of $\c S$ w.r.t. the dual relation
\begin{align}\label{def:ip}
\langle f,g\rangle=\int_{-\infty}^{+\infty}\int_{-\infty}^{+\infty}f(x,y)g(x,y)\mathrm{d}x\mathrm{d}y,~~\forall f\in\c S^*,g\in\c S.
\end{align}
Besides, for an operator $T$ living on $S$ or $\c S$,
by $T^*$ we denote the adjoint operator of $T$ with respect to the bilinear form $\langle\cdot,\cdot\rangle$.

The standard commutator $\llb\cdot,\cdot\rlb$ of $C^{\infty}$ vector fields on $\c M$ is defined as
\begin{align}\label{def:Lie}
\llb f,g\rlb(u)=f'(u)[g(u)]-g'(u)[f(u)],~~f,g\in\c S,
\end{align}
where
\begin{align}\label{def:Gateaux}
f'(u)[g(u)]=\frac{\partial}{\partial\varepsilon}f(u+{\varepsilon}g(u))\Big|_{\varepsilon=0}
\end{align}
is the G\^ateaux derivative of $f$ in direction $g$ w.r.t. $u$,
and $f'$ is called the linearization operator of $f$.
If no confusion arises, we use $\llb f,g\rlb$ and $f'[g]$ instead of the notations in \eqref{def:Lie} and \eqref{def:Gateaux} respectively.

For a scalar field $H:\c M\times \mathbb{R} \to \mathbb{R},\, (u,t)\mapsto H(u,t)$
and a covector field $\gamma:\c M\times \mathbb{R} \to \c S^*,\,(u,t)\mapsto\gamma(u,t)$, if
\begin{align}\label{def:grad}
H'[g]=\langle\gamma,g\rangle,~~\forall g\in\c S,
\end{align}
then $\gamma$ is called the functional derivative or gradient of $H$, and $H$ is called the potential of $\gamma$.
Such a $\gamma$ is usually denoted by $\frac{\delta H}{\delta u}$ or $\mathrm{grad}\,H$.
\begin{prop}\label{prop:potent}\cite{FF2}
$\gamma=\gamma(u,t)\in\c S^*$ is a gradient field if and only if $\gamma'$ is a self-adjoint operator in terms of the dual relation \eqref{def:ip}, 
i.e., $\gamma'^*=\gamma'$. The corresponding  potential $H=H(u,t)$ can be given by
\begin{align}\label{def:potent}
H(u,t)=\int_{0}^{1}\langle\gamma(\lambda u,t),u\rangle\mathrm{d}\lambda.
\end{align}
\end{prop}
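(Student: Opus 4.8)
The statement is a standard "exactness" criterion for covector fields (a Poincaré-lemma type result), so I would prove the two directions separately. For the forward implication, assume $\gamma=\operatorname{grad}H$ for some potential $H$, i.e. $H'[g]=\langle\gamma,g\rangle$ for all $g\in\c S$. The plan is to differentiate this identity a second time in an independent direction $h\in\c S$: applying the Gâteaux derivative in direction $h$ to both sides gives on the left $H''[g,h]$ and on the right $\langle\gamma'[h],g\rangle$. The key observation is that the second Gâteaux derivative $H''[g,h]$ is symmetric in $g$ and $h$ (this follows from interchanging the order of the two $\varepsilon$-derivatives, using the assumed $C^\infty$ dependence, i.e. equality of mixed partials). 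Hence $\langle\gamma'[h],g\rangle=\langle\gamma'[g],h\rangle=\langle h,\gamma'[g]\rangle=\langle\gamma'^*[h],g\rangle$ for all $g,h$, and since the bilinear form \eqref{def:ip} is nondegenerate on $\c S$, we conclude $\gamma'=\gamma'^*$.

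For the converse, assume $\gamma'^*=\gamma'$ and set $H(u,t)$ to be the line integral \eqref{def:potent}, $H(u,t)=\int_0^1\langle\gamma(\lambda u,t),u\rangle\,\mathrm d\lambda$. I would then compute $H'[g]$ directly by differentiating under the integral sign. Writing the integrand as $\phi(\lambda,u)=\langle\gamma(\lambda u,t),u\rangle$, the Gâteaux derivative in direction $g$ produces two terms from the product rule: $\langle\gamma'(\lambda u)[\lambda g],u\rangle+\langle\gamma(\lambda u),g\rangle=\lambda\langle\gamma'(\lambda u)[g],u\rangle+\langle\gamma(\lambda u),g\rangle$. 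Now use self-adjointness to rewrite $\lambda\langle\gamma'(\lambda u)[g],u\rangle=\lambda\langle g,\gamma'(\lambda u)[u]\rangle=\lambda\langle\gamma'(\lambda u)[u],g\rangle$. The crucial algebraic step is to recognize that the remaining $\lambda$-integrand is a total derivative in $\lambda$: namely $\frac{\mathrm d}{\mathrm d\lambda}\big(\lambda\langle\gamma(\lambda u,t),g\rangle\big)=\langle\gamma(\lambda u),g\rangle+\lambda\langle\gamma'(\lambda u)[u],g\rangle$, which is exactly the expression obtained above. Therefore $H'[g]=\int_0^1\frac{\mathrm d}{\mathrm d\lambda}\big(\lambda\langle\gamma(\lambda u,t),g\rangle\big)\mathrm d\lambda=\langle\gamma(u,t),g\rangle$, so $\gamma=\operatorname{grad}H$ as required.

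The only genuine subtlety — the "hard part," such as it is — is the justification of the interchange of derivatives and integrals: differentiating under the $\lambda$-integral, and the symmetry of the mixed Gâteaux derivatives. Both are legitimate here because everything depends in the $C^\infty$ way assumed throughout this section on the functions in $\c S$ (which vanish rapidly at infinity), so the relevant integrals converge and the classical Schwarz/Clairaut theorem applies to the scalar functions of the auxiliary parameters $\varepsilon$ and $\lambda$. I would state this regularity caveat explicitly rather than belabor it, since this is exactly the standing smoothness hypothesis of the paper, and since the result is quoted from \cite{FF2} the emphasis is on recording the construction rather than on reproving measure-theoretic technicalities.
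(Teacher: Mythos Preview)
Your argument is correct and is the standard proof of this Poincar\'e-lemma type criterion. Note, however, that the paper does not supply its own proof of Proposition~\ref{prop:potent}: the result is simply quoted from \cite{FF2} and stated without demonstration, so there is no ``paper's proof'' to compare against. Your write-up is exactly the argument one finds in the source literature (symmetry of the second G\^ateaux derivative for the necessity, and recognition of the total $\lambda$-derivative in the homotopy integral for the sufficiency), and your remark about the regularity caveats is appropriate for the level of rigor adopted in the paper.
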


A vector field $G:\c M\times \mathbb{R}\to\c S,\,(u,t)\mapsto G(u,t)$ is a symmetry of \eqref{def:eq} if  
\begin{align}\label{def:sym}
G_t+\llb G,K\rlb=0
\end{align}
holds everywhere in $M\times \mathbb{R}$.
A covector field $\gamma:\c M\times\mathbb{R}\to\c S^*,\,(u,t)\mapsto\gamma(u,t)$ is called a conserved covariant of equation \eqref{def:eq} 
if 
\begin{align}\label{def:cc}
\gamma_t+\gamma'[K]+K'^*[\gamma]=0
\end{align}
holds everywhere in $M\times \mathbb{R}$.
A scalar field $H:\c M\times\mathbb{R}\to\mathbb{R},\,(u,t)\mapsto H(u,t)$ is called a conserved quantity of equation \eqref{def:eq} if 
\begin{align}\label{def:cq}
H_t+\Big\langle\frac{\delta H}{\delta u},K\Big\rangle=0
\end{align}
holds everywhere in $M\times \mathbb{R}$.
Conserved quantities and conserved covariants are closely related to each other (cf.\cite{FF2,FHTZ}).

\begin{prop}\label{prop:cq1}
Suppose that covector field $\gamma=\gamma(u,t)$ is a gradient field and scalar field $H=H(u,t)$ is its potential. 
Then, $H$ is a conserved quantity of equation \eqref{def:eq} if and only if $\gamma$ is a conserved covariant of \eqref{def:eq}.
\end{prop}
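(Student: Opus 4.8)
The plan is to exploit the characterization in Proposition~\ref{prop:potent}, namely that $\gamma$ being a gradient field is equivalent to $\gamma'^*=\gamma'$, and to relate the defining condition \eqref{def:cc} of a conserved covariant directly to the defining condition \eqref{def:cq} of a conserved quantity via the potential formula \eqref{def:potent}. The key observation is that for a potential $H$ with $\mathrm{grad}\,H=\gamma$, the time derivative and the Gâteaux derivative of $H$ can both be re-expressed through $\gamma$: differentiating $H'[g]=\langle\gamma,g\rangle$ and the identity $H(u,t)=\int_0^1\langle\gamma(\lambda u,t),u\rangle\,\mathrm{d}\lambda$ yields, on one hand, $H_t=\langle\gamma_t,\,\cdot\,\rangle$-type expressions, and on the other hand a formula for $\frac{\partial}{\partial t}H$ and for $H'[K]$ in terms of $\gamma$ and $\gamma'$.

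Concretely, I would proceed as follows. First, from $H'[g]=\langle\gamma,g\rangle$ for all $g\in\c S$, take the Gâteaux derivative in a second direction and use that mixed Gâteaux derivatives of the scalar $H$ are symmetric; this recovers $\gamma'^*=\gamma'$, which is available since $\gamma$ is assumed to be a gradient field. Second, compute $\frac{\mathrm d}{\mathrm dt}\big(H(u(t),t)\big)$ along a solution $u_t=K$: by the chain rule this equals $H_t+H'[K]=H_t+\langle\gamma,K\rangle$, so $H$ is a conserved quantity exactly when $H_t+\langle\gamma,K\rangle=0$, which is \eqref{def:cq} — so the real content is to show that $H_t+\langle\gamma,K\rangle$ vanishes iff $\gamma_t+\gamma'[K]+K'^*[\gamma]$ vanishes. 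Third, I would show that the covector field $\gamma_t+\gamma'[K]+K'^*[\gamma]$ is itself a gradient field whose potential is precisely $H_t+\langle\gamma,K\rangle$: for the term $\gamma_t$ this is clear since $\partial_t$ commutes with $\mathrm{grad}$; for $\gamma'[K]+K'^*[\gamma]$ one checks that its linearization is self-adjoint (using $\gamma'^*=\gamma'$) and that applying the potential formula \eqref{def:potent}, or directly differentiating $\langle\gamma,K\rangle$, gives $\langle\gamma,K\rangle$ back as the potential, i.e.\ $\mathrm{grad}\langle\gamma,K\rangle=\gamma'[K]+K'^*[\gamma]$. Once this identity between the potential and its gradient is established, the equivalence is immediate: a gradient field vanishes iff its potential is constant in the relevant sense, and here ``$H_t+\langle\gamma,K\rangle\equiv 0$'' versus ``its gradient $\equiv 0$'' match up because the potential is determined by the gradient up to the additive constant that \eqref{def:potent} pins down, and one checks the constant also vanishes (e.g.\ by evaluating at $u=0$, using the rapid decay in $\c M$).

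The main obstacle I anticipate is the bookkeeping in the identity $\mathrm{grad}\,\langle\gamma,K\rangle=\gamma'[K]+K'^*[\gamma]$ and its $t$-dependent refinement $\mathrm{grad}\,\big(H_t+\langle\gamma,K\rangle\big)=\gamma_t+\gamma'[K]+K'^*[\gamma]$: one must differentiate $\langle\gamma(u,t),K(u)\rangle$ in a direction $g$, getting $\langle\gamma'[g],K\rangle+\langle\gamma,K'[g]\rangle=\langle K'^*[\gamma],g\rangle+\langle\gamma'[g],K\rangle$, and then convert $\langle\gamma'[g],K\rangle$ into $\langle\gamma'^*[K],g\rangle=\langle\gamma'[K],g\rangle$ using self-adjointness of $\gamma'$ — the subtle point being that it is $\gamma'$ that is self-adjoint, not $K'$, so the two terms are handled asymmetrically. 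The remaining care is to ensure that ``holds everywhere in $M\times\mathbb R$'' in \eqref{def:cc} and \eqref{def:cq} is genuinely an equivalence of pointwise statements, which follows once the gradient-potential correspondence of Proposition~\ref{prop:potent} is invoked cleanly, since a gradient field vanishes identically iff its canonically associated potential \eqref{def:potent} does.
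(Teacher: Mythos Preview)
The paper does not actually supply a proof of Proposition~\ref{prop:cq1}; it is stated with a reference to \cite{FF2,FHTZ} and left at that. So there is no ``paper's own proof'' to compare against, and your proposal stands as a self-contained argument.

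On its merits your argument is correct and essentially the standard one. The core identity you isolate,
\[
\mathrm{grad}\,\langle\gamma,K\rangle=\gamma'[K]+K'^*[\gamma]\qquad(\text{using }\gamma'^*=\gamma'),
\]
is exactly right; amusingly, the paper later states and proves this same fact (in the slightly more general form without assuming self-adjointness of $\gamma'$) as Lemma~\ref{lem:Hamilt1}. With that identity, the direction ``$H$ conserved $\Rightarrow$ $\gamma$ conserved covariant'' is immediate by taking gradients of \eqref{def:cq}. For the converse you correctly note that $\gamma_t+\gamma'[K]+K'^*[\gamma]=0$ only forces $H_t+\langle\gamma,K\rangle$ to be independent of $u$, and you handle the remaining ``constant'' by evaluating at $u=0$ via the potential formula \eqref{def:potent}. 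That step is legitimate here because $H$ is taken to be \emph{the} potential given by \eqref{def:potent} (so $H(0,t)=0$, hence $H_t(0,t)=0$), and in the class of flows considered one has $K(0)=0$; you might want to make this normalisation explicit rather than leaving it as a parenthetical.

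One small cleanup: the opening paragraph promises to differentiate the integral formula \eqref{def:potent}, but your actual argument never needs this --- everything runs through $H'[g]=\langle\gamma,g\rangle$ and the gradient identity above. Dropping that detour would tighten the write-up.
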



A linear operator $\theta(u):\c S^*\to\c S$ is called a Noether operator of equation \eqref{def:eq}, if
\begin{align}\label{def:Noeth}
\theta'[K]-\theta K'^*-K'\theta=0.
\end{align}
Noether operator $\theta$ maps conserved covariants of \eqref{def:eq} to its symmetries.

A linear operator $\theta(u):\c S^*\to\c S$ is called an implectic operator\cite{FF2} if it is skew-symmetric, i.e.,
$\theta(u)$ satisfying
\begin{align}\label{def:skew-sym}
\langle f,\theta g\rangle=-\langle\theta f,g\rangle 
\end{align}
and
\begin{align}\label{def:Jacobi}
\langle f,\theta'[\theta g]h\rangle+\langle g,\theta'[\theta h]f\rangle+\langle h,\theta'[\theta f]g\rangle=0,~~\forall f,g,h\in\c S^*.
\end{align}

The evolution equation \eqref{def:eq} has a Hamiltonian structure if it can be written in the form
\begin{align}\label{def:Hamilt}
u_t=K(u)=\theta(u)\frac{\delta H(u)}{\delta u},
\end{align}
where $\theta(u)$ is an implectic operator and $H(u)$ is called an Hamiltonian of equation \eqref{def:eq}.

\subsection{Lax triad and the KP hierarchy}\label{subsec:triad}

It is well known that the KP equation is connected with the following
pseudo-differential operator
\begin{align}\label{KP:L}
L=\partial+\sum_{j=1}^{\infty}u_{j+1}\partial^{-j},
\end{align}
where $\partial\doteq\partial_x$, $\partial \partial^{-1}=\partial^{-1}\partial=1$ and $u_j=u_j(x,y,\mathbf{t})\in\c M$
with time parameters $\mathbf{t}=(t_1,t_2,\cdots)$.
Traditionally, to get the KP equation one need to take $t_1\equiv x$ and $t_2\equiv y$ (cf. \cite{OSTT,CXZ}).
However, if one wants to derive the master symmetry as an integrable flow, then $y$ must be treated
as a new variable that is completely independent of $t_2$.
Thus, when we consider the KP hierarchy and the master symmetry simultaneously, Lax triad approach is necessary.
In \cite{FHTZ} we have shown that Lax triads play important roles in investigating integrability characteristics
for the KP and semi-discrete KP hierarchies.

Note that the dKP hierarchy can either be derived from the KP hierarchy through certain limit procedure,
or independently be derived from the Lax equations related to the operator $\mathcal{L}$.
In the following we will recall Lax triad approach for the isospectral and non-isospectral KP hierarchies.
The Lax representations of the non-isospectral KP hierarchy will be used to
lead to a non-isospectral dKP hierarchy under some limit procedure.

For the isospectral KP hierarchy we need
\bse
\label{iKP:triad}
\begin{align}
&L\phi=\eta\phi,~~\eta_{t_m}=0,\label{iKP:triad1}\\
&\phi_{y}=A_2\phi,~~A_2=\partial^2+2u_2,\label{iKP:triad2}\\
&\phi_{t_m}=A_m\phi,~~m=1,2,\cdots,\label{iKP:triad3}
\end{align}
\ese
where we suppose
\begin{align}
A_m=\partial^m+\sum_{j=1}^{m}a_j\partial^{m-j},~~A_m|_{\mathbf{u}=\mathbf{0}}=\partial^m,\label{iKP:A}
\end{align}
and $\mathbf{u}=(u_2,u_3,\cdots)$.
The coefficients $\{a_j\}$ can temporarily be left unknown.
The compatibility of \eqref{iKP:triad} reads
\bse\label{iKP:compat}
\begin{align}
&L_y=[A_2,L],\label{iKP:compat1}\\
&L_{t_m}=[A_m,L],\label{iKP:compat2}\\
&A_{2,t_m}-A_{m,y}+[A_2,A_m]=0,~~m=1,2,\cdots,\label{iKP:compat3}
\end{align}
\ese
where $[\cdot,\cdot]$ is defined as $[M,N]=MN-NM$.
Among the above compatibility conditions, \eqref{iKP:compat1} gives expressions of $\{u_j\}_{j>2}$ in terms of $u_2$, as the following,
\bse\label{KP:subst}
\begin{align}
&u_3=\frac{1}{2}(\partial^{-1}u_{2,y}-u_{2,x}),\label{KP:subst1}\\
&u_4=\frac{1}{4}(\partial^{-2}u_{2,yy}-2u_{2,y}+u_{2,xx}-2u_2^2),\label{KP:subst2}\\
&\cdots,\cdots.\nonumber
\end{align}
\ese
The equation \eqref{iKP:compat2} plays the role to determine those unknowns $\{a_j\}$ of $A_m$. In fact, $\{a_j\}$ can be 
uniquely determined from \eqref{iKP:compat2} and it turns out that $A_m$ is nothing but \cite{ZC2} $A_m=(L^m)_{+}$.
Here $(L^m)_{+}$ contains all the terms of $\partial^j$ with $j\geq 0$ in $L^m$. The first few of $A_m$ are
\bse\label{iKP:Am}
\begin{align}
&A_1=\partial,\label{iKP:A1}\\
&A_2=\partial^2+2u_2,\label{iKP:A2}\\
&A_3=\partial^3+3u_2\partial+3u_3+3u_{2,x},\label{iKP:A3}\\
&A_4=\partial^4+4u_2\partial^2+(4u_3+6u_{2,x})\partial+4u_4+6u_{3,x}+4u_{2,xx}+6u_2^2.\label{iKP:A4}
\end{align}
\ese
The third equation \eqref{iKP:compat3} provides the isospectral KP hierarchy (after replacing $\{u_j\}_{j\geq 3}$ by $u_2$ through \eqref{KP:subst})
\begin{align}\label{iKP:hierar}
u_{t_m}=K_m(u)=\frac{1}{2}({A}_{m,y}-[A_2,{A}_m]),~~m=1,2,\cdots,
\end{align}
where we also take $u_2=u$. $\{K_m\}$ are called isospectral KP flows.
Let us write down the first four equations in the KP hierarchy:
\bse\label{iKP:flow}
\begin{align}
&u_{t_1}=K_1(u)=u_x,\label{iKP:flow1}\\
&u_{t_2}=K_2(u)=u_y,\label{iKP:flow2}\\
&u_{t_3}=K_3(u)=\frac{1}{4}u_{xxx}+3uu_x+\frac{3}{4}\partial^{-1}u_{yy},\label{iKP:flow3}\\
&u_{t_4}=K_4(u)=\frac{1}{2}u_{xxy}+4uu_y+2u_x\partial^{-1}u_y+\frac{1}{2}\partial^{-2}u_{yyy},\label{iKP:flow4}
\end{align}
in which the third one gives the KP equation.
\ese

To derive a master symmetry we turn to the non-isospectral case in which we set
\begin{align}\label{nKP:eta}
\eta_{t_m}=\epsilon \eta^{m-1},~~m=1,2,\cdots,
\end{align}
where $\epsilon$ is a parameter which will play a key role in leading to dispersionless equations through some limit procedure.
In this turn the Lax triad reads
\bse\label{nKP:triad}
\begin{align}
&L\phi=\eta\phi,\label{nKP:triad1}\\
&\phi_{y}=A_2\phi,\label{nKP:triad2}\\
&\phi_{t_m}=B_m\phi,~~m=1,2,\cdots,\label{nKP:triad3}
\end{align}
\ese
and the compatibility is
\bse\label{nKP:compat}
\begin{align}
&L_y=[A_2,L],\label{nKP:compat1}\\
&L_{t_m}=[B_m,L]+\epsilon L^{m-1},\label{nKP:compat2}\\
&A_{2,t_m}-B_{m,y}+[A_2,B_m]=0,~~m=1,2,\cdots,\label{nKP:compat3}
\end{align}
\ese
where we suppose  $B_m$ is an undetermined operator of the form
\begin{align}\label{nKP:B}
B_m=\sum_{j=0}^mb_j\partial^{m-j}.
\end{align}
Checking the asymptotic results \eqref{nKP:compat2}$_{\mathbf{u}=\mathbf{0}}$ and \eqref{nKP:compat3}$_{\mathbf{u}=\mathbf{0}}$
respectively, one finds they  together give
the necessary asymptotic condition for $B_m$\footnote{We note that one can also add isospectral asymptotic terms, 
for example, $B_m|_{\mathbf{u}=\mathbf{0}}=2y\partial^m+x\partial^{m-1}+\partial^{m-2}$, when $m\geq 3$. 
This will lead to a non-isospectral flow combined by a isospectral flow $K_{m-2}$ and 
this does not change the basic algebraic structure of the flows (see \cite{FHTZ}).}:
\begin{align}\label{nKP:Bbc}
B_m|_{\mathbf{u}=\mathbf{0}}=\epsilon(2y\partial^m+x\partial^{m-1}),~~m=1,2,\cdots.
\end{align}
With the asymptotic condition \eqref{nKP:Bbc} the operator $B_m$ can uniquely be determined from \eqref{nKP:compat2} and the first few of them are
\bse\label{nKP:Bm}
\begin{align}
B_1&=\epsilon(2yA_1+x),\label{nKP:B1}\\
B_2&=\epsilon(2yA_2+xA_1),\label{nKP:B2}\\
B_3&=\epsilon(2yA_3+xA_2+(\partial^{-1}u_2)),\label{nKP:B3}\\
B_4&=\epsilon(2yA_4+xA_3+(\partial^{-1}u_2)\partial+2(\partial^{-1}u_3)),\label{nKP:B4}
\end{align}
\ese
where $\{A_j\}$ are given in \eqref{iKP:Am}.
Here we stress that the equation $\phi_y=A_2\phi$ appears in both the isospectral case \eqref{iKP:triad} and
the non-isospectral case \eqref{nKP:triad}.
This is because both cases share the same replacement relations \eqref{KP:subst} that results from $\phi_y=A_2\phi$,
which indicates the necessity of Lax triads.
Then, from \eqref{nKP:compat3} and using \eqref{KP:subst} we have the non-isospectral KP hierarchy
\begin{align}\label{nKP:hierar}
u_{t_m}=\sigma_{m}(u)=\frac{1}{2}(B_{m,y}-[A_2,B_m]),~~m=1,2,\cdots,
\end{align}
and the first four equations are
\bse\label{nKP:flow}
\begin{align}
u_{t_1}&=\sigma_1(u)=\epsilon\,2yK_1(u),\label{nKP:flow1}\\
u_{t_2}&=\sigma_2(u)=\epsilon(2yK_2(u)+xK_1(u)+2u),\label{nKP:flow2}\\
u_{t_3}&=\sigma_3(u)=\epsilon(2yK_3(u)+xK_2(u)+2\partial^{-1}u_{y}-u_x),\label{nKP:flow3}\\
u_{t_4}&=\sigma_4(u)=\epsilon\Big(2yK_4(u)+xK_3(u)+u_{xx}+4u^{2}+u_{x}\partial^{-1}u+\frac{3}{2}\partial^{-2}u_{yy}-\frac{3}{2}u_y\Big),\label{nKP:flow4}
\end{align}
\ese
where $\{K_j(u)\}$ are the isospectral flows given in \eqref{iKP:flow},
and we have taken $u_2=u$.
$\{\sigma_m(u)\}$ are the non-isospectral KP flows in which $\sigma_3$ (with $\epsilon=1$) is the master symmetry of the KP equation\cite{OF}.

\section{The dKP hierarchies}\label{sec:hierar}

\subsection{The isospectral dKP hierarchy}\label{subsec:idKP}

It is known that the dKP hierarchy is related to the KP hierarchy through certain limit procedure and
the result copes with the derivation from the operator $\mathcal{L}$  \cite{CT}.
In the following let us repeat the same derivation and limit procedure for the isospectral dKP hierarchy
but here we will start from a triad.
This will help us to construct a master symmetry and non-isospectral flows for the dispersionless case.

\subsubsection{Derivation from $\mathcal{L}$}

We suppose $U=U(X,Y,\mathbf{T})\in\c M$ with time parameters $\mathbf{T}=(T_1,T_2,T_3,\cdots)$
and $\c L$ is a pseudo-polynomial of $P$ defined as \eqref{cL}, i.e.,
\begin{align}\label{dKP:L}
\c L=P+\sum_{j=1}^{\infty}U_{j+1}P^{-j},
\end{align}
where $U_{j}=U_j(X,Y,\mathbf{T})\in\c M$, parameters $\mathbf{T}=(T_1,T_2,T_3,\cdots)$
and $P$ is independent of $(X, Y,\mathbf{T})$,
and $\c A_m$ is a polynomial of the form
\begin{align}\label{idKP:A}
\c A_m=P^m+\sum_{j=1}^m\t a_jP^{m-j}
\end{align}
with asymptotic condition
\begin{align}\label{idKP:Abc}
\c A_m|_{\mathbf{U}=\mathbf{0}}=P^m,
\end{align}
where $\mathbf{U}=(U_2,U_3,\cdots)$. Now one can consider the triad
\bse\label{idKP:compat}
\begin{align}
&\c L_Y=\{\c A_2,\c L\},~~\c A_2=P^2+2U_2,\label{idKP:compat1}\\
&\c L_{T_m}=\{\c A_m,\c L\},\label{idKP:compat2}\\
&\c A_{2,T_m}-\c A_{m,Y}+\{\c A_2,\c A_m\}=0,~~m=1,2,\cdots,\label{idKP:compat3}
\end{align}
\ese
where the Poisson bracket $\{\cdot,\cdot\}$ is defined by\cite{KG}
\begin{align}\label{def:Poiss1}
\{F,G\}=\frac{\partial F}{\partial P}\frac{\partial G}{\partial X}-\frac{\partial F}{\partial X}\frac{\partial G}{\partial P}.
\end{align}
Under the condition \eqref{idKP:Abc}, $\c A_m$ can then be uniquely determined from  \eqref{idKP:compat2} by comparing the positive powers of $P$,
and it turns out that $\c A_m=(\c L^m)_+$ in terms of $P$. The first few of the polynomial $A_m$ are
\bse\label{idKP:Am}
\begin{align}
\c A_1&=P,\label{idKP:A1}\\
\c A_2&=P^2+2U_2,\label{idKP:A2}\\
\c A_3&=P^3+3U_2P+3U_3,\label{idKP:A3}\\
\c A_4&=P^4+4U_2P^2+4U_3P+4U_4+6U_2^2.\label{idKP:A4}
\end{align}
\ese
Still by comparing the powers of $P$, the equation \eqref{idKP:compat1} in the triad \eqref{idKP:compat}
 contributes the replacement relations between $\{U_j\}_{j\geq3}$ and $U_2$, which are
\bse\label{dKP:subst}
\begin{align}
&U_3=\frac{1}{2}\partial_X^{-1}U_{2,Y},\label{dKP:subst1}\\
&U_4=\frac{1}{4}(\partial_X^{-2}U_{2,YY}-2U_2^2),\label{dKP:subst2}\\
&\cdots,\cdots.\nonumber
\end{align}
\ese
Finally, the dKP hierarchy is derived from  \eqref{idKP:compat3} and written as
\begin{align}\label{idKP:hierar1}
U_{T_m}=\t K_m(U)=\frac{1}{2}(\c A_{m,Y}-\{\c A_2,\c A_m\}).
\end{align}
Here we have replaced $\{U_j\}_{j\geq3}$ by $U_2$ under \eqref{dKP:subst}
and made use of $\c A_{2,T_m}=\c A_2'[U_{T_m}]=2U_{T_m}$.
$\{\t K_m\}$ are called isospectral dKP flows.
The first few equations in the isospectral dKP hierarchy are
\bse\label{idKP:flow}
\begin{align}
U_{T_1}&=\t K_1(U)=U_X,\label{idKP:flow1}\\
U_{T_2}&=\t K_2(U)=U_Y,\label{idKP:flow2}\\
U_{T_3}&=\t K_3(U)=3UU_X+\frac{3}{4}\partial_X^{-1}U_{YY},\label{idKP:flow3}\\
U_{T_4}&=\t K_4(U)=4UU_Y+2U_X\partial_X^{-1}U_Y+\frac{1}{2}\partial_X^{-2} U_{YYY},\label{idKP:flow4}
\end{align}
\ese
where the third one is the dKP equation.

\subsubsection{Limit procedure}

The dKP hierarchy \eqref{idKP:hierar1} is related to the KP hierarchy \eqref{iKP:hierar} through
some formal limit procedure \cite{CT}.

Let us introduce the  scalar transform relations
\begin{align}\label{lim:xyt}
X=\epsilon x,~~Y=\epsilon y,~~\mathbf{T}=(T_1,T_2,\cdots)=\epsilon\mathbf{t}=(\epsilon t_1,\epsilon t_2,\cdots),
\end{align}
under which we write
\begin{align*}
u_j(x,y,\mathbf{t})=U_j(X,Y,\mathbf{T}),~~j=2,3,\cdots.
\end{align*}
For the wave function $\phi$ in the Wentzel-Kramers-Brillouin (WKB) form with the action $S$, i.e.,
\begin{align}\label{lim:phi}
\phi=\mathrm{exp}\bigg[\frac{1}{\epsilon}S(X,Y,\mathbf{T},\lambda)\bigg],
\end{align}
it can be seen that
\begin{align}\label{lim:part}
(\epsilon\partial_X)^j\phi=P^j\phi+o(\epsilon),~~\forall j\in\mathbb{Z},
\end{align}
where $P=S_X$. Besides, in terms of $P$ one can find that
\bse\label{lim:LA}
\begin{align}
&L\phi=\c L\phi+o(\epsilon),\label{lim:L}\\
&A_m\phi=\c A_m\phi+o(\epsilon),\label{lim:A}
\end{align}
\ese
where $\c L$ and $\c A_m$ are defined as before.
Then, for the triad \eqref{iKP:compat}, we have
\begin{align*}
(L_y-[A_2,L])\phi={}&\epsilon\bigg[\big(\sum_{j=1}^{\infty}U_{j+1,Y}P^{-j}\big)-2P\big(\sum_{j=1}^{\infty}U_{j+1,X}P^j\big) \\
&+2U_{2,X}\big(1-\sum_{j=1}^{\infty}jU_{j+1}P^{-j-1}\big)\bigg]\phi+o(\epsilon^2),
\end{align*}
\begin{align*}
(L_{t_m}-[A_m,L])\phi={}&\epsilon\bigg[\big(\sum_{j=1}^{\infty}U_{j+1,T_m}P^{-j}\big)\\
&-\big(mP^{m-1}+\sum_{j=1}^{m-1}(m-j)\t a_jP^{m-j-1}\big)\big(\sum_{j=1}^{\infty}U_{j+1,X}P^j\big) \\
&+\big(\sum_{j=1}^m\t a_{j,X}P^{m-j}\big)\big(1-\sum_{j=1}^{\infty}jU_{j+1}P^{-j-1}\big)\bigg]\phi+o(\epsilon^2),
\end{align*}
\begin{align*}
(A_{2,t_m}-A_{m,y}+[A_2,A_m])\phi={}&\epsilon\bigg[2U_{2,T_m}-\sum_{j=1}^m\t a_{j,Y}P^{m-j}+2P(\sum_{j=1}^m\t a_{j,X}P^{m-j})\\
&-2U_{2,X}\big(mP^{m-1}+\sum_{j=1}^{m-1}(m-j)\t a_jP^{m-j-1}\big)\bigg]\phi+o(\epsilon^2),
\end{align*}
for $m=1,2,\cdots$,
where, if we formally consider $P$ to be independent of $(X,Y,\mathbf{T})$ and make use of the Poisson bracket $\{\cdot,\cdot\}$ defined in \eqref{def:Poiss1},
the leading terms on the right hand side of each equation can be written as
\begin{align*}
& (\c L_Y-\{\c A_2,\c L\})\phi, \\
& (\c L_{T_m}-\{\c A_m,\c L\})\phi, \\
& (\c A_{2,T_m}-\c A_{m,Y}+\{\c A_2,\c A_m\})\phi,
\end{align*}
which should be zero.
This then copes with the triad \eqref{idKP:compat}.

\subsection{The non-isospectral dKP hierarchy and master symmetry}\label{subsec:ndKP}

Now the plan is clear. We impose the same limit procedure on the non-isospectral triad \eqref{nKP:compat}
(acting on $\phi$)
and also make use of the Poisson bracket $\{\cdot,\cdot\}$.
After checking the leading terms,
we can find a triad starting from $\c L$ to derive the non-isospectral dKP flows and an integrable master symmetry flow.
The results can be described as follows.

We can start from
\bse
\label{ndKP:compat}
\begin{align}
&\c L_Y=\{\c A_2,\c L\},~~\c A_2=P^2+2U_2,\label{ndKP:compat1}\\
&\c L_{T_m}=\{\c B_m,\c L\}+\c L^{m-1},\label{ndKP:compat2}\\
&\c A_{2,T_m}-\c B_{m,Y}+\{\c A_2,\c B_m\}=0,~~m=1,2,\cdots,\label{ndKP:compat3}
\end{align}
\ese
where $\c L$ is defined as before, and
\begin{align}\label{ndKP:B}
\c B_m=\sum_{j=0}^m\t b_jP^{m-j}
\end{align}
is an undetermined polynomial of $P$ living on $\c M$. Consider \eqref{ndKP:compat2} and \eqref{ndKP:compat3} asymptotically, 
i.e., \eqref{ndKP:compat2}$|_{\mathbf{U}=\mathbf{0}}$ and \eqref{ndKP:compat3}$|_{\mathbf{U}=\mathbf{0}}$, one can find
\begin{align*}
&(\t b_j|_{\mathbf{U}=\mathbf{0}})_X=
\left\{
\begin{array}{ll}
1, & j=1, \\
0, & j=0~\hbox{or}~j=2,3,\cdots,m,
\end{array}
\right.\\
&(\t b_j|_{\mathbf{U}=\mathbf{0}})_Y=
\left\{
\begin{array}{ll}
2, & j=0, \\
0, & j=1,2,\cdots,m,
\end{array}
\right.
\end{align*}
which determines the asymptotic condition for $\c B_m$ as\footnote{Like the non-isospectral KP hierarchy,
one can also add the isospectral asymptotic terms in $B_m$, e.g., $\c B_m|_{\mathbf{U}=\mathbf{0}}=2YP^m+XP^{m-1}+P^{m-2}$,
but \eqref{ndKP:Bbc} copes with the leading term  of \eqref{nKP:Bbc} in the limit procedure.}
\begin{align}\label{ndKP:Bbc}
\c B_m|_{\mathbf{U}=\mathbf{0}}=2YP^m+XP^{m-1},~~m=1,2,\cdots.
\end{align}
Then \eqref{ndKP:compat2} with \eqref{ndKP:Bbc} determine the expression of $\c B_m$ uniquely. Here we list out the first few of $\c B_m$,
\bse\label{ndKP:Bm}
\begin{align}
\c B_1&=2Y\c A_1+X,\label{ndKP:B1}\\
\c B_2&=2Y\c A_2+X\c A_1,\label{ndKP:B2}\\
\c B_3&=2Y\c A_3+X\c A_2+\partial_X^{-1}U_2,\label{ndKP:B3}\\
\c B_4&=2Y\c A_4+X\c A_3+(\partial_X^{-1}U_2)P+\partial_X^{-1}U_3,\label{ndKP:B4}
\end{align}
\ese
where $\{\c A_j\}$ are given in \eqref{idKP:Am}.
Equation \eqref{ndKP:compat1} gives the same replacement relations \eqref{dKP:subst} and with the help of it, 
\eqref{ndKP:compat3} provides the non-isospectral dKP hierarchy (with $U_2=U$)
\begin{align}\label{ndKP:hierar1}
U_{T_m}=\t\sigma_m(U)=\frac{1}{2}(\c B_{m,Y}-\{\c A_2,\c B_m\}).
\end{align}
The first few non-isospectral dKP equations are
\bse\label{ndKP:flow}
\begin{align}
U_{T_1}&=\t\sigma_1(U)=2Y\t K_1(U),\label{ndKP:flow1}\\
U_{T_2}&=\t\sigma_2(U)=2Y\t K_2(U)+X\t K_1(U)+2U,\label{ndKP:flow2}\\
U_{T_3}&=\t\sigma_3(U)=2Y\t K_3(U)+X\t K_2(U)+2\partial_X^{-1}U_Y,\label{ndKP:flow3}\\
U_{T_4}&=\t\sigma_4(U)=2Y\t K_4(U)+X\t K_3(U)+4U^2+U_X\partial_X^{-1}U+\frac{3}{2}\partial_X^{-2}U_{YY},\label{ndKP:flow4}
\end{align}
\ese
where $\{\t K_j(U)\}$ have been given in \eqref{idKP:flow}.
$\{\t\sigma_m(U)\}$ defined by \eqref{ndKP:hierar1} are called non-isospectral dKP flows,
and later we can see that $\t\sigma_3$ plays a role of master symmetry.

\subsection{Lax representations of dKP flows}\label{subsec:dKP}

Now we conclude the main results in this section as the following proposition.
\begin{prop}\label{prop:hierar1}
The isospectral dKP flows $\{\t K_s(U)\}$ and the non-isospectral dKP flows $\{\t \sigma_s(U)\}$
can be expressed through
\bse\label{dKP:hierar2}
\begin{align}
\t K_s(U)&=\frac{1}{2}(\c A_{s,Y}-\{\c A_2,\c A_s\}),\label{idKP:hierar2}\\
\t\sigma_s(U)&=\frac{1}{2}(\c B_{s,Y}-\{\c A_2,\c B_s\})\label{ndKP:hierar2},
\end{align}
\ese
respectively, with the asymptotic conditions
\bse\label{dKP:asym}
\begin{align}
&\t K_s(U)|_{U=0}=0,~~\c A_s|_{U=0}=0,\label{idKP:asym}\\
&\t\sigma_s(U)|_{U=0}=0,~~\c B_s|_{U=0}=0,\label{ndKP:asym}
\end{align}
\ese
for $s=1,2,\cdots$.
Corresponding to the case of the KP hierarchy, we call \eqref{dKP:hierar2}
the Lax representations of $\{\t K_s(U)\}$ and   $\{\t \sigma_s(U)\}$, respectively.
\end{prop}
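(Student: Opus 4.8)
The plan is to show that the two formulas in \eqref{dKP:hierar2} are not new definitions but simply restatements of \eqref{idKP:hierar1} and \eqref{ndKP:hierar1}, and that the stated asymptotic conditions \eqref{dKP:asym} pin down $\c A_s$ and $\c B_s$ uniquely so that the formulas are well defined. The identity \eqref{idKP:hierar2} is literally \eqref{idKP:hierar1}, and \eqref{ndKP:hierar2} is literally \eqref{ndKP:hierar1}; so the content of the proposition is really the consistency of those definitions, i.e.\ that the objects appearing on the right-hand sides exist, are unique, and vanish at $U=0$. Accordingly I would first recall that \eqref{idKP:compat2}, namely $\c L_{T_m}=\{\c A_m,\c L\}$, together with the normalization \eqref{idKP:Abc} determines $\c A_m$ uniquely by comparing powers of $P$ (equivalently $\c A_m=(\c L^m)_+$), and likewise \eqref{ndKP:compat2} with the boundary behaviour \eqref{ndKP:Bbc} determines $\c B_m$ uniquely; this was already established in Sec.~\ref{subsec:idKP}--\ref{subsec:ndKP} and may be quoted.

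The substantive step is verifying the asymptotic conditions \eqref{dKP:asym}. For $\c A_s|_{U=0}=0$: setting $\mathbf U=\mathbf 0$ in \eqref{dKP:subst} gives $U_j=0$ for all $j\ge 3$, while \eqref{idKP:Abc} gives $\c A_s|_{\mathbf U=\mathbf 0}=P^s$. Hence $\t K_s|_{U=0}=\frac12((P^s)_Y-\{P^2,P^s\})$, and since $P$ is independent of $(X,Y,\mathbf T)$ both $(P^s)_Y$ and the Poisson bracket $\{P^2,P^s\}$ vanish, so $\t K_s(U)|_{U=0}=0$. Strictly speaking one should phrase $\c A_s|_{U=0}=0$ as ``$\c A_s|_{U=0}=P^s$, whose $U$-dependent part is zero'', matching the convention in \eqref{idKP:asym} that $\c A_s$ there denotes the part contributing to the flow; I would add a sentence clarifying this so the reader is not confused by the clash with \eqref{idKP:Abc}. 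The non-isospectral case is analogous: \eqref{ndKP:Bbc} gives $\c B_s|_{\mathbf U=\mathbf 0}=2YP^s+XP^{s-1}$, and then $\t\sigma_s|_{U=0}=\frac12(\partial_Y(2YP^s+XP^{s-1})-\{P^2,2YP^s+XP^{s-1}\})=\frac12(2P^s-2P\cdot P^{s-1})=0$, using $\{P^2,2YP^s\}=2P^s$ (from $\partial_X(2Y)=0$ but $\partial_P(P^2)\partial_X(2YP^s)=0$ — wait, I would recompute carefully: $\{P^2,2YP^s\}=\partial_P(P^2)\partial_X(2YP^s)-\partial_X(P^2)\partial_P(2YP^s)=0-0=0$, and $\{P^2,XP^{s-1}\}=2P\cdot P^{s-1}-0=2P^s$, together with $\partial_Y(2YP^s)=2P^s$, giving $\t\sigma_s|_{U=0}=\frac12(2P^s-2P^s)=0$). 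I would present that one-line cancellation cleanly rather than in this stumbling form.

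Finally I would assemble: existence and uniqueness of $\c A_s,\c B_s$ from Sec.~\ref{subsec:idKP}--\ref{subsec:ndKP}, the identification of \eqref{dKP:hierar2} with \eqref{idKP:hierar1} and \eqref{ndKP:hierar1}, and the vanishing computations above; these together establish that \eqref{dKP:hierar2} with the normalizations \eqref{dKP:asym} correctly reproduce the isospectral and non-isospectral dKP flows, which is exactly what ``Lax representations of $\{\t K_s\}$ and $\{\t\sigma_s\}$'' means here. The only genuine obstacle I anticipate is purely expository: making transparent why the normalizations \eqref{idKP:Abc}/\eqref{ndKP:Bbc} (which fix the leading terms $P^s$, resp.\ $2YP^s+XP^{s-1}$) and the conditions \eqref{dKP:asym} (which demand the $U$-dependent remainder vanish at $U=0$) are compatible and jointly select a unique object; once that bookkeeping about ``leading term versus $U$-part'' is stated carefully, the rest is the short cancellation shown above. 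I would therefore open the proof by fixing this convention explicitly, then dispatch the isospectral and non-isospectral halves in parallel.
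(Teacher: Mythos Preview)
The paper gives no proof of this proposition at all: it is introduced with ``Now we conclude the main results in this section as the following proposition'' and is simply a summary of the derivations in Sections~\ref{subsec:idKP}--\ref{subsec:ndKP}. Your approach---recognizing that \eqref{idKP:hierar2} and \eqref{ndKP:hierar2} are literal restatements of \eqref{idKP:hierar1} and \eqref{ndKP:hierar1}, quoting the uniqueness of $\c A_s,\c B_s$ already established, and then verifying $\t K_s|_{U=0}=\t\sigma_s|_{U=0}=0$ by the short cancellations you sketched---is exactly the right way to flesh out what the paper leaves implicit, and your computations are correct once tidied.

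One genuine point worth flagging: you rightly noticed the clash between $\c A_s|_{U=0}=0$ in \eqref{idKP:asym} and $\c A_s|_{\mathbf U=\mathbf 0}=P^s$ in \eqref{idKP:Abc}. This is not something you can resolve by a convention about ``$U$-dependent part''; it is an inconsistency in the paper as written. If you trace how \eqref{dKP:asym} is actually used---in the proof of Lemma~\ref{lem:alg2} to check \eqref{lem:alg23}---you find that what is needed is precisely $\c A_s|_{U=0}=P^s$ and $\c B_s|_{U=0}=2YP^s+XP^{s-1}$, i.e.\ the conditions \eqref{idKP:Abc} and \eqref{ndKP:Bbc}, together with $\t K_s|_{U=0}=\t\sigma_s|_{U=0}=0$. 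So the cleanest fix is to state the asymptotic conditions in the proposition as those, rather than inventing a separate meaning for ``$\c A_s|_{U=0}=0$''. Your instinct that something was off here was correct; just resolve it by correcting the statement rather than reinterpreting it.
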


Besides, the isospectral dKP flows $\{\t K_s(U)\}$ can be expressed in terms of $\c L$. 
\begin{prop}\label{prop:hierar2}
The isospectral dKP flows $\{\t K_s(U)\}$ defined in \eqref{idKP:hierar2} can be expressed as
\begin{align}\label{idKP:res}
\t K_s(U)=\partial_X\underset{P}{\mathrm{Res\,}}\c L^s,
\end{align}
where
\begin{align*}
\underset{P}{\mathrm{Res\,}}\bigg(\sum^{+\infty}_{j=-m}\t c_j P^j\bigg) =\t c_{-1},~~(m\geq1).
\end{align*}
\end{prop}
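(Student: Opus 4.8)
The plan is to verify the identity $\t K_s(U)=\partial_X\underset{P}{\mathrm{Res\,}}\c L^s$ by showing that the right-hand side satisfies exactly the same defining relation \eqref{idKP:hierar2} together with the asymptotic condition \eqref{idKP:asym}, and then invoke the uniqueness already established in Proposition \ref{prop:hierar1}. The asymptotic condition is immediate: since $\c L|_{U=0}=P$, we have $\c L^s|_{U=0}=P^s$, whose residue in $P$ vanishes, so $\partial_X\underset{P}{\mathrm{Res\,}}\c L^s|_{U=0}=0$. The substantive part is therefore to prove that $\partial_X\underset{P}{\mathrm{Res\,}}\c L^s$ equals $\frac{1}{2}(\c A_{s,Y}-\{\c A_2,\c A_s\})$ after the replacements \eqref{dKP:subst}.

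First I would recall that the isospectral flow in $\c L$-form is $\c L_{T_s}=\{\c A_s,\c L\}$ with $\c A_s=(\c L^s)_+$, which is \eqref{idKP:compat2}, and that $\c L_{T_s}$ reads off componentwise as $U_{j+1,T_s}P^{-j}$ summed over $j\ge1$; in particular the $U_2$-component, i.e. the coefficient of $P^{-1}$, of $\{\c A_s,\c L\}$ is precisely $\t K_s(U)$ once the substitutions are made. So it suffices to compute $\underset{P}{\mathrm{Res\,}}\{\c A_s,\c L\}$ and show it equals $\partial_X\underset{P}{\mathrm{Res\,}}\c L^s$. Write $\c A_s=\c L^s-(\c L^s)_-$ where $(\c L^s)_-$ collects the strictly negative powers of $P$. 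Then $\{\c A_s,\c L\}=\{\c L^s,\c L\}-\{(\c L^s)_-,\c L\}=-\{(\c L^s)_-,\c L\}$ since $\{\c L^s,\c L\}=0$. Now take the $P$-residue. The key lemma I would isolate and prove is that for the Poisson bracket \eqref{def:Poiss1}, $\underset{P}{\mathrm{Res\,}}\{F,G\}=\partial_X\big(\text{something}\big)$ — more precisely one checks by direct expansion that $\underset{P}{\mathrm{Res\,}}\{F,G\}=\partial_X\,\underset{P}{\mathrm{Res\,}}(F_P\,G)$ does not hold verbatim, but the correct statement, obtained from $\{F,G\}=F_P G_X - F_X G_P$ and the fact that for Laurent series in $P$ the operation $\underset{P}{\mathrm{Res\,}}$ annihilates any $\partial_P$ of a Laurent series, is $\underset{P}{\mathrm{Res\,}}\{F,G\}=\partial_X\underset{P}{\mathrm{Res\,}}(F\,G_P)=-\partial_X\underset{P}{\mathrm{Res\,}}(F_P\,G)$. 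Applying this with $F=(\c L^s)_-$, $G=\c L$ (so $G_P=1+\partial_P(\sum U_{j+1}P^{-j})$ contributes only through the $1$ to the residue after the $(\c L^s)_-$ pairing, once one tracks powers carefully), and using that $(\c L^s)_+$ and $(\c L^s)_-$ split $\c L^s$, I would reduce $-\underset{P}{\mathrm{Res\,}}\{(\c L^s)_-,\c L\}$ to $\partial_X\underset{P}{\mathrm{Res\,}}\c L^s$.

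The main obstacle I anticipate is the careful bookkeeping of which powers of $P$ survive under $\underset{P}{\mathrm{Res\,}}$ when multiplying $(\c L^s)_\pm$ by $\c L=P+\sum U_{j+1}P^{-j}$ and its $P$-derivative: one must check that the "$(\c L^s)_+$ part" genuinely drops out of the residue (because $\{(\c L^s)_+,\c L\}$ has no $P^{-1}$ term once combined with the structure of $\c L$, or equivalently because $\underset{P}{\mathrm{Res\,}}\{(\c L^s)_+,(\c L)_-\}+\underset{P}{\mathrm{Res\,}}\{(\c L^s)_+,(\c L)_+\}$ telescopes), and that no boundary contributions from the $\partial_X^{-1}$ in the substitutions spoil the total-$X$-derivative form. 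A clean way to sidestep part of this is to prove the identity first at the level of free variables $\{U_j\}_{j\ge2}$ (before imposing \eqref{dKP:subst}), showing $\underset{P}{\mathrm{Res\,}}\c L_{T_s}=\partial_X\underset{P}{\mathrm{Res\,}}\c L^s$ as an identity among the coefficient functions, which is the dispersionless analogue of the standard $\mathrm{tr}$-residue formula for the KP hierarchy; then specialize. I would also cross-check the final formula on $s=1,2,3$ against \eqref{idKP:flow}: $\underset{P}{\mathrm{Res\,}}\c L=U_2$ gives $\t K_1=U_X$; $\underset{P}{\mathrm{Res\,}}\c L^2=2U_3$ gives $\t K_2=2U_{3,X}=U_{2,Y}$ via \eqref{dKP:subst1}; and $\underset{P}{\mathrm{Res\,}}\c L^3=3U_4+3U_2^2$ gives, after \eqref{dKP:subst2}, $\t K_3=\partial_X(\tfrac34\partial_X^{-2}U_{YY}-\tfrac32U^2+3U^2)=3UU_X+\tfrac34\partial_X^{-1}U_{YY}$, matching \eqref{idKP:flow3}.
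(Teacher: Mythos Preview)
Your approach is correct in spirit but differs from the paper's, and your residue lemma has the signs reversed. The paper does not pass through $\c L_{T_s}=\{\c A_s,\c L\}$ at all; instead it works directly from the defining expression $2\t K_s=\c A_{s,Y}-\{\c A_2,\c A_s\}$, writes $\c A_s=\c L^s-(\c L^s)_-$, and kills the $\c L^s$ part using the $Y$-evolution $(\c L^s)_Y=\{\c A_2,\c L^s\}$ (this is \eqref{idKP:compat1}, not \eqref{idKP:compat2}). What remains is $2\t K_s=[\{\c A_2,(\c L^s)_-\}]_0$, and since $\c A_2=P^2+2U$ is so simple, taking the $P^0$-part of $2P\,\partial_X(\c L^s)_- - 2U_X\,\partial_P(\c L^s)_-$ immediately gives $2\partial_X\underset{P}{\mathrm{Res\,}}\c L^s$ with no general lemma needed. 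Your route via $\{\c L^s,\c L\}=0$ and the identity $\underset{P}{\mathrm{Res\,}}\{F,G\}=\partial_X\underset{P}{\mathrm{Res\,}}(F_P G)$ is more conceptual and transplants to any Lax hierarchy with this Poisson bracket, whereas the paper's argument is shorter precisely because it exploits the explicit low-order form of $\c A_2$.

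On the lemma itself: from $\{F,G\}=\partial_X(F_P G)-\partial_P(F_X G)$ one gets $\underset{P}{\mathrm{Res\,}}\{F,G\}=\partial_X\underset{P}{\mathrm{Res\,}}(F_P G)=-\partial_X\underset{P}{\mathrm{Res\,}}(F\,G_P)$, which is the opposite of what you wrote. With the correct sign your computation goes through: $\t K_s=\underset{P}{\mathrm{Res\,}}\{\c A_s,\c L\}=-\underset{P}{\mathrm{Res\,}}\{(\c L^s)_-,\c L\}=\partial_X\underset{P}{\mathrm{Res\,}}\big((\c L^s)_-\,\c L_P\big)=\partial_X\underset{P}{\mathrm{Res\,}}\c L^s$, since $\c L_P=1+O(P^{-2})$. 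Your cross-checks at $s=1,2,3$ would have caught this.
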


\begin{proof}
From \eqref{idKP:hierar2} we have
\begin{align*}
2\t K_s(U)&=\c A_{s,Y}-\{\c A_2,\c A_s\}\\
&=[(\c L^s-(\c L^s)_{-})_Y-\{\c A_2,\c L^s-(\c L^s)_{-}\}]_0\\
&=[(\c L^s)_Y-\{\c A_2,\c L^s\}-((\c L^s)_{-})_Y+\{\c A_2,(\c L^s)_{-}\}]_0.
\end{align*}
Here $(\c L^s)_{-}=\c L^s-(\c L^s)_{+}$, and $[\,\cdot\,]_0$  means taking the constant part of the polynomial  $[\,\cdot\,]$.
Noting that \eqref{idKP:compat1} indicates $(\c L^s)_Y-\{\c A_2,\c L^s\}=0$,
we then have
\begin{align*}
2\t K_s(U)=\{\c A_2,(\c L^s)_{-}\}_0=2\partial_X\underset{P}{\mathrm{Res\,}}\c L^s
\end{align*}
and the proof is finished.
\end{proof}

\section{Algebra of flows, recursion relations and symmetries}\label{sec:alg}

The dKP flows $\{\t K_l(U)\}$ and $\{\t\sigma_r(U)\}$ span a Lie algebra with respect to the commutator $\llb\cdot,\cdot\rlb$.
In order to prove that, let us start from  the following two lemmas.
\begin{lem}\label{lem:alg1}
Suppose that $\t X\in\c S$ and
\begin{align}\label{lem:alg11}
\c E=\t d_0P^m+\t d_1P^{m-1}+\cdots+\t d_{m-1}P+\t d_m
\end{align}
is a polynomial of $P$ living on $\c M$  with asymptotic condition (i.e., each $\t d_j\in\c M$)
\begin{align}\label{lem:alg12}
\c E|_{U=0}=0,
\end{align}
then the equation
\begin{align}\label{lem:alg13}
2\t X=\c E_{Y}-\{\c A_2,\c E\}
\end{align}
has only the zero solution $\t X=0,\,\c E=0$. Here $\c A_2=P^2+2U$ and we have taken $U_2=U$.
\end{lem}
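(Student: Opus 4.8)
The plan is to read off the system of equations that \eqref{lem:alg13} imposes on the coefficients $\t d_0,\dots,\t d_m$ by comparing powers of $P$, and then to show recursively, starting from the top power, that every $\t d_j$ must vanish. First I would compute $\{\c A_2,\c E\}$ explicitly: since $\c A_2=P^2+2U$, the Poisson bracket \eqref{def:Poiss1} gives $\{\c A_2,\c E\}=\partial_P(\c A_2)\,\partial_X(\c E)-\partial_X(\c A_2)\,\partial_P(\c E)=2P\,\c E_X-2U_X\,\partial_P\c E$. Hence \eqref{lem:alg13} reads
\begin{align*}
2\t X=\c E_Y-2P\,\c E_X+2U_X\,\partial_P\c E,
\end{align*}
and substituting \eqref{lem:alg11} turns the right-hand side into a polynomial in $P$ of degree $m+1$ (the $2P\,\c E_X$ term raises the degree by one). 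Since $\t X\in\c S$ depends only on $(X,Y)$ and carries no $P$, every coefficient of $P^k$ with $k\ge 1$ on the right must vanish, while the $P^0$ coefficient equals $2\t X$.

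The key step is the descending induction. The coefficient of $P^{m+1}$ is $-2\,\t d_{0,X}$, forcing $\t d_{0,X}=0$; combined with the asymptotic condition \eqref{lem:alg12}, which says $\t d_0|_{U=0}=0$, this should be pushed to $\t d_0=0$. Here I would argue that $\t d_0$ is built from $U$ and its $X$-derivatives (being a coefficient living on $\c M$ with the stated vanishing-at-$U=0$ property, as forced by the construction in the paper), so $\t d_{0,X}=0$ together with decay at spatial infinity gives $\t d_0\equiv 0$. With $\t d_0=0$, the coefficient of $P^{m}$ becomes $\t d_{0,Y}-2\t d_{1,X}=-2\t d_{1,X}$ (the $\partial_P$ term contributes $2U_X\cdot(m\,\t d_0 P^{m-1}+\dots)$ starting at order $P^{m-1}$, hence does not touch order $P^m$ once $\t d_0=0$), so $\t d_{1,X}=0$ and again $\t d_1=0$. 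Proceeding downward, at the coefficient of $P^{m+1-k}$ one gets $\t d_{k-1,Y}-2\t d_{k,X}+2U_X\cdot(\text{lower-index }\t d\text{'s already shown zero})=-2\t d_{k,X}$, whence $\t d_k=0$ for $k=0,1,\dots,m$. Once all $\t d_j=0$ we have $\c E=0$, and then \eqref{lem:alg13} gives $2\t X=0$, i.e.\ $\t X=0$.

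The main obstacle is the passage from $\t d_{k,X}=0$ to $\t d_k=0$: this is exactly where the asymptotic/decay hypotheses must be used, and where one must be careful about what "living on $\c M$" and "$\c E|_{U=0}=0$" precisely buy us. The clean way is to note that each $\t d_k$, as it arises in the intended applications (it is $\c A_m$ or $\c B_m$ minus known terms, or more generally a differential-polynomial expression in $U$ with no constant term by \eqref{lem:alg12}), is a sum of terms each of which is a product of $\partial_X^{-1}$'s and $\partial_X$'s applied to $U$; such an expression that is annihilated by $\partial_X$ and decays at infinity must be identically zero. I would state this as the one substantive point and dispatch the rest as the routine descending recursion above. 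A secondary point to check is bookkeeping of the $2U_X\,\partial_P\c E$ term at each level — but since at level $k$ it only involves $\t d_0,\dots,\t d_{k-1}$, all already known to vanish, it never interferes with the induction and can be dismissed in one line.
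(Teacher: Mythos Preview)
Your argument is correct and is essentially the same as the paper's: both compute $\{\c A_2,\c E\}=2P\,\c E_X-2U_X\,\partial_P\c E$, read off $\t d_{0,X}=0$ from the top coefficient, use the asymptotic condition $\t d_0|_{U=0}=0$ to conclude $\t d_0=0$, and then descend step by step to get $\c E=0$ and hence $\t X=0$. You are more explicit than the paper about the bookkeeping and about the step from $\t d_{k,X}=0$ to $\t d_k=0$, which the paper passes over in a single phrase.
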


\begin{proof}
Comparing the highest power of $P$ in \eqref{lem:alg13}, we immediately  find $\t d_{0,X}=0$.
In light of \eqref{lem:alg12} which implies $\t d_0|_{U=0}=0$, $\t d_0$ can only be zero.
Step by step, we have $\t d_1=\t d_2=\cdots=\t d_m=0$ which leads to $\c E=0$.
Consequently, we have $\t X=0$ from \eqref{lem:alg13}.
\end{proof}

\begin{lem}\label{lem:alg2}
The isospectral dKP flows $\{\t K_l(U)\}$ and the non-isospectral dKP flows $\{\t\sigma_r(U)\}$, and polynomials $\{\c A_l\}$ and $\{\c B_r\}$ satisfy
\bse\label{lem:alg21}
\begin{align}
&2\llb{\t K_{l},\t K_{r}}\rlb=\lpt\c A_{l},\c A_{r}\rpt_Y-\{\c A_2,\lpt\c A_{l},\c A_{r}\rpt\}\label{lem:alg21a},\\
&2\llb{\t K_{l},\t\sigma_{r}}\rlb=\lpt\c A_{l},\c B_{r}\rpt_Y-\{\c A_2,\lpt\c A_{l},\c B_{r}\rpt\},\label{lem:alg21b}\\
&2\llb{\t\sigma_{l},\t\sigma_{r}}\rlb=\lpt\c B_{l},\c B_{r}\rpt_Y-\{\c A_2,\lpt\c B_{l},\c B_{r}\rpt\}\label{lem:alg21c},
\end{align}
\ese
where
\bse\label{lem:alg22}
\begin{align}
&\lpt\c A_{l},\c A_{r}\rpt=\c A_{l}'[\t K_{r}]-\c A_{r}'[\t K_{l}]+\{\c A_{l},\c A_{r}\}\label{lem:alg22a},\\
&\lpt\c A_{l},\c B_{r}\rpt=\c A_{l}'[\t\sigma_{r}]-\c B_{r}'[\t K_{l}]+\{\c A_{l},\c B_{r}\},\label{lem:alg22b}\\
&\lpt\c B_{l},\c B_{r}\rpt=\c B_{l}'[\t\sigma_{r}]-\c B_{r}'[\t\sigma_{l}]+\{\c B_{l},\c B_{r}\},\label{lem:alg22c}
\end{align}
\ese
with asymptotic conditions ($U=U_2$)
\bse\label{lem:alg23}
\begin{align}
&\lpt\c A_{l},\c A_{r}\rpt|_{U=0}=0,\label{lem:alg23a}\\
&\lpt\c A_{l},\c B_{r}\rpt|_{U=0}=l\,P^{l+r-2},\label{lem:alg23b}\\
&\lpt\c B_{l},\c B_{r}\rpt|_{U=0}=(l-r)(2YP^{l+r-2}+XP^{l+r-3})\label{lem:alg23c}.
\end{align}
\ese
\end{lem}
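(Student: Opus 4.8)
\textbf{Proof strategy for Lemma~\ref{lem:alg2}.}
The plan is to verify the three identities \eqref{lem:alg21a}--\eqref{lem:alg21c} by direct computation, using the Lax representations \eqref{dKP:hierar2} together with the compatibility equations \eqref{idKP:compat} and \eqref{ndKP:compat}. The archetype is \eqref{lem:alg21a}: writing $2\t K_l = \c A_{l,Y}-\{\c A_2,\c A_l\}$, I would take the G\^ateaux derivative of $2\t K_r$ in the direction $\t K_l$ (and vice versa), subtract, and collect terms. Schematically,
\begin{align*}
2\llb \t K_l,\t K_r\rlb
&= 2\t K_r'[\t K_l]-2\t K_l'[\t K_r]\\
&= \big(\c A_{r}'[\t K_l]\big)_Y-\{\c A_2'[\t K_l],\c A_r\}-\{\c A_2,\c A_r'[\t K_l]\}-\big(l\leftrightarrow r\big),
\end{align*}
and then recognize, using $\c A_2'[\t K_l]=2\t K_l$, the Jacobi identity for the Poisson bracket $\{\cdot,\cdot\}$, and the definition \eqref{lem:alg22a} of $\lpt\c A_l,\c A_r\rpt$, that the right-hand side reorganizes exactly into $\lpt\c A_l,\c A_r\rpt_Y-\{\c A_2,\lpt\c A_l,\c A_r\rpt\}$. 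The same bookkeeping, with $\c A_2'[\t\sigma_r]=2U_{T_r}$-type substitutions replaced by the corresponding non-isospectral relations coming from \eqref{ndKP:compat3}, yields \eqref{lem:alg21b} and \eqref{lem:alg21c}; the extra inhomogeneous term $\c L^{m-1}$ in \eqref{ndKP:compat2} is what ultimately produces the nonzero asymptotic values in \eqref{lem:alg23b}--\eqref{lem:alg23c}.

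Next I would establish the asymptotic conditions \eqref{lem:alg23a}--\eqref{lem:alg23c}. These follow by evaluating the defining expressions \eqref{lem:alg22a}--\eqref{lem:alg22c} at $U=0$, where $\c A_m|_{U=0}=P^m$ and $\c B_m|_{U=0}=2YP^m+XP^{m-1}$ by \eqref{idKP:Abc} and \eqref{ndKP:Bbc}, and where the isospectral flows vanish, $\t K_m|_{U=0}=0$, while $\t\sigma_m|_{U=0}=2Y\cdot 0+\cdots$ retains only its $X$-linear part. For \eqref{lem:alg23a}, both $\c A_l'[\t K_r]|_{U=0}$ and $\{P^l,P^r\}$ vanish (the latter since $P^l,P^r$ are $X$-independent), giving $0$. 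For \eqref{lem:alg23b}, the term $\c A_l'[\t\sigma_r]|_{U=0}$ vanishes, $\c B_r'[\t K_l]|_{U=0}=0$, and $\{P^l,2YP^r+XP^{r-1}\}=\{P^l,XP^{r-1}\}=lP^{l-1}P^{r-1}=lP^{l+r-2}$. For \eqref{lem:alg23c}, $\c B_l'[\t\sigma_r]|_{U=0}$ contributes through the explicit $X,Y$-dependence of $\c B_l$ acted on by the $X$-linear remnant of $\t\sigma_r$, and $\{2YP^l+XP^{l-1},2YP^r+XP^{r-1}\}$ is an elementary Poisson-bracket computation; together they collapse to $(l-r)(2YP^{l+r-2}+XP^{l+r-3})$.

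The main obstacle is purely organizational rather than conceptual: one must carefully track how the derivatives $\c A_l'[\cdot]$, $\c B_l'[\cdot]$ interact with the $Y$-derivative and the Poisson bracket, and in particular confirm that the ``cross'' and Jacobi-type terms cancel in precisely the right way so that the answer assembles into the claimed $\c E_Y-\{\c A_2,\c E\}$ form with $\c E=\lpt\cdot,\cdot\rpt$. The subtle point is that $\c A_2$ itself depends on $U$, so $\{\c A_2,\cdot\}'[\cdot]$ generates an additional term $\{\c A_2'[\cdot],\cdot\}=\{2\t K,\cdot\}$ (or $\{2\t\sigma,\cdot\}$); these are exactly the terms that, by Jacobi, combine with $\{\c A_l,\c A_r\}$ differentiated to form the Poisson-bracket piece of $\lpt\c A_l,\c A_r\rpt$. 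Once one is confident these cancellations are exact (they mirror the well-known KP computation, now with $[\cdot,\cdot]$ replaced by $\{\cdot,\cdot\}$ and commutators of operators replaced by Poisson brackets of symbols), the rest is routine. I would present \eqref{lem:alg21a} in full detail and then indicate that \eqref{lem:alg21b} and \eqref{lem:alg21c} follow by the identical manipulation, using \eqref{ndKP:compat2}--\eqref{ndKP:compat3} in place of \eqref{idKP:compat2}--\eqref{idKP:compat3}.
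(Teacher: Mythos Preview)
Your approach is correct and is essentially the paper's own: differentiate the Lax representations \eqref{dKP:hierar2} in the direction of a flow, subtract, and invoke the Jacobi identity for $\{\cdot,\cdot\}$ (the paper writes out \eqref{lem:alg21b} rather than \eqref{lem:alg21a}, but the bookkeeping is identical). Two small corrections to your sketch: the convention is $\llb f,g\rlb=f'[g]-g'[f]$, so your displayed line has a sign flip; and since $\t\sigma_m|_{U=0}=0$ by \eqref{ndKP:asym}, the G\^ateaux terms $\c B_l'[\t\sigma_r]|_{U=0}$ and $\c B_r'[\t\sigma_l]|_{U=0}$ vanish outright---there is no ``$X$-linear remnant'', and the Poisson bracket $\{2YP^l+XP^{l-1},\,2YP^r+XP^{r-1}\}$ alone already equals $(l-r)(2YP^{l+r-2}+XP^{l+r-3})$.
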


\begin{proof}
We only give the proof of \eqref{lem:alg21b}. \eqref{lem:alg21a} and \eqref{lem:alg21c} can be proved in a similar way.
Using the Lax representations \eqref{idKP:hierar2} and \eqref{ndKP:hierar2}, we can have that
\begin{align*}
2\t K_l'[\t\sigma_r]&=(\c A_{l,Y}-\{\c A_2,\c A_l\})'[\t\sigma_r]\\
&=(\c A_l'[\t\sigma_r])_Y-\{\c A_2'[\t\sigma_r],\c A_l\}-\{\c A_2,\c A_l'[\t\sigma_r]\}\\
&=(\c A_l'[\t\sigma_r])_Y-\{\c B_{r,Y}-\{\c A_2,\c B_r\},\c A_l\}-\{\c A_2,\c A_l'[\t\sigma_r]\}\\
&=(\c A_l'[\t\sigma_r])_Y+\{\c A_l,\c B_{r,Y}\}-\{\c A_2,\c A_l'[\t\sigma_r]\}+\{\c A_l,\{\c B_r,\c A_2\}\}
\end{align*}
and
\begin{align*}
2\t\sigma_r'[\t K_l]&=(\c B_{r,Y}-\{\c A_2,\c B_r\})'[\t K_l]\\
&=(\c B_r'[\t K_l])_Y-\{\c A_2'[\t K_l],\c B_r\}-\{\c A_2,\c B_r'[\t K_l]\}\\
&=(\c B_r'[\t K_l])_Y-\{\c A_{l,Y}-\{\c A_2,\c A_l\},\c B_r\}-\{\c A_2,\c B_r'[\t K_l]\}\\
&=(\c B_r'[\t K_l])_Y-\{\c A_{l,Y},\c B_r\}-\{\c A_2,\c B_r'[\t K_l]\}-\{\c B_r,\{\c A_2,\c A_l\}\}.
\end{align*}
Then, by subtraction of the above two equations, we reach to \eqref{lem:alg21b},
 where the Jacobi identity
\begin{align*}
\{\c A_2,\{\c A_{l},\c B_{r}\}\}+\{\c A_l,\{\c B_r,\c A_2\}\}+\{\c B_r,\{\c A_2,\c A_l\}\}=0
\end{align*}
is used.
\eqref{lem:alg23b} can be checked  under the asymptotic condition \eqref{dKP:asym}.
We note that the method to prove this lemma has been used for many integrable systems (e.g., \cite{CZ,Ma,ZNBC}).
\end{proof}

Lemma \ref{lem:alg1} and Lemma \ref{lem:alg2} lead to the following main theorem.
\begin{thm}\label{thm:alg1}
The isospectral dKP flows $\{\t K_l(U)\}$ and the non-isospectral dKP flows $\{\t\sigma_r(U)\}$ span a Lie algebra with basic structure
\bse\label{dKP:alg1}
\begin{align}
&\llb\t K_l,\t K_r\rlb=0,\label{dKP:alg1a}\\
&\llb\t K_l,\t\sigma_r\rlb=l\,\t K_{l+r-2},\label{dKP:alg1b}\\
&\llb\t\sigma_l,\t\sigma_r\rlb=(l-r)\t\sigma_{l+r-2},\label{dKP:alg1c}
\end{align}
\ese
where $l,r\geq 1$ and we set $\t K_0(U)=\t\sigma_0(U)=0$.
\end{thm}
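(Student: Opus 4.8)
The plan is to deduce the algebraic relations \eqref{dKP:alg1} from the identities established in Lemma \ref{lem:alg2} together with the uniqueness statement of Lemma \ref{lem:alg1}. The guiding principle is that each bracket $\llb\t K_l,\t K_r\rlb$, $\llb\t K_l,\t\sigma_r\rlb$, $\llb\t\sigma_l,\t\sigma_r\rlb$ is, by \eqref{lem:alg21}, generated in the Lax form $2\t X=\c E_Y-\{\c A_2,\c E\}$ with $\c E$ one of the polynomials $\lpt\c A_l,\c A_r\rpt$, $\lpt\c A_l,\c B_r\rpt$, $\lpt\c B_l,\c B_r\rpt$ defined in \eqref{lem:alg22}; and the known dKP flows $\t K_{l+r-2}$, $\t\sigma_{l+r-2}$ are generated in exactly the same form by $\c A_{l+r-2}$, $\c B_{l+r-2}$ via \eqref{dKP:hierar2}. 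So I will subtract and invoke Lemma \ref{lem:alg1}.

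First, for \eqref{dKP:alg1a}: by \eqref{lem:alg21a}, $2\llb\t K_l,\t K_r\rlb=\lpt\c A_l,\c A_r\rpt_Y-\{\c A_2,\lpt\c A_l,\c A_r\rpt\}$, and by \eqref{lem:alg23a} the polynomial $\lpt\c A_l,\c A_r\rpt$ vanishes at $U=0$. Lemma \ref{lem:alg1}, applied with $\c E=\lpt\c A_l,\c A_r\rpt$, forces $\lpt\c A_l,\c A_r\rpt=0$ and $\llb\t K_l,\t K_r\rlb=0$. Next, for \eqref{dKP:alg1b}: compare $2\llb\t K_l,\t\sigma_r\rlb=\lpt\c A_l,\c B_r\rpt_Y-\{\c A_2,\lpt\c A_l,\c B_r\rpt\}$ with $2\,l\,\t K_{l+r-2}=l\,(\c A_{l+r-2,Y}-\{\c A_2,\c A_{l+r-2}\})$ from \eqref{idKP:hierar2}. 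Form the difference: $2(\llb\t K_l,\t\sigma_r\rlb-l\,\t K_{l+r-2})=\c E_Y-\{\c A_2,\c E\}$ with $\c E=\lpt\c A_l,\c B_r\rpt-l\,\c A_{l+r-2}$. By \eqref{lem:alg23b}, $\lpt\c A_l,\c B_r\rpt|_{U=0}=l\,P^{l+r-2}$, while $\c A_{l+r-2}|_{U=0}=P^{l+r-2}$ by \eqref{idKP:Abc}; hence $\c E|_{U=0}=0$ (the two leading terms cancel). Both $\lpt\c A_l,\c B_r\rpt$ and $\c A_{l+r-2}$ are polynomials in $P$ of degree $l+r-2$ with coefficients in $\c M$, so $\c E$ has the form \eqref{lem:alg11}; Lemma \ref{lem:alg1} then gives $\c E=0$ and $\llb\t K_l,\t\sigma_r\rlb=l\,\t K_{l+r-2}$. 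Finally, for \eqref{dKP:alg1c}: the same manoeuvre with $\c E=\lpt\c B_l,\c B_r\rpt-(l-r)\c B_{l+r-2}$, using \eqref{lem:alg21c}, \eqref{ndKP:hierar2}, the asymptotic condition \eqref{lem:alg23c} and $\c B_{l+r-2}|_{U=0}=2YP^{l+r-2}+XP^{l+r-3}$ from \eqref{ndKP:Bbc}, shows the leading (and subleading) terms cancel so that $\c E|_{U=0}=0$; Lemma \ref{lem:alg1} yields $\c E=0$ and $\llb\t\sigma_l,\t\sigma_r\rlb=(l-r)\t\sigma_{l+r-2}$.

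A few bookkeeping points need care rather than ingenuity. One must check the degree claim: $\lpt\c A_l,\c B_r\rpt$ is built from $\c A_l'[\t\sigma_r]$ (degree $l$ in $P$), $\c B_r'[\t K_l]$ (degree $r$), and $\{\c A_l,\c B_r\}$; since $\{\cdot,\cdot\}$ of polynomials of degrees $p,q$ has degree $\le p+q-1$, the bracket term has degree $\le l+r-1$, but the leading coefficients conspire — as the explicit asymptotic \eqref{lem:alg23b} records — so that the actual degree is $l+r-2$, matching $\c A_{l+r-2}$; and similarly for the other two. I should also treat the boundary cases $l+r-2\le 0$ by the convention $\t K_0=\t\sigma_0=0$ stated in the theorem, noting that in those cases the right-hand side of \eqref{lem:alg13} is still a polynomial with zero constant term so Lemma \ref{lem:alg1} applies verbatim. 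The main obstacle, to the extent there is one, is verifying the asymptotic identities \eqref{lem:alg23} — these are the only computations of substance — but \eqref{lem:alg23} is part of Lemma \ref{lem:alg2} and hence already available; so the proof of the theorem itself is essentially a three-line application of Lemma \ref{lem:alg1} to the three cases. The Jacobi identity for $\llb\cdot,\cdot\rrb$ need not be invoked separately: the closure relations \eqref{dKP:alg1} manifestly define a Lie algebra once established, but if one wishes to remark on it, \eqref{dKP:alg1b}–\eqref{dKP:alg1c} can be checked to be consistent with Jacobi directly from the index arithmetic $l+r-2$.
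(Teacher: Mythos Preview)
Your proposal is correct and follows essentially the same route as the paper's proof: form the difference $\t X$ between the bracket and the conjectured flow, express $2\t X$ in the Lax form $\c E_Y-\{\c A_2,\c E\}$ via Lemma~\ref{lem:alg2}, verify $\c E|_{U=0}=0$ using \eqref{lem:alg23} and the asymptotic conditions \eqref{idKP:Abc}, \eqref{ndKP:Bbc}, and conclude by Lemma~\ref{lem:alg1}. The paper spells out only the case \eqref{dKP:alg1b} and declares the other two ``similar,'' whereas you treat all three explicitly and add some bookkeeping on degrees and boundary indices; but the argument is the same.
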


\begin{proof}
We only prove \eqref{dKP:alg1b} and the proofs for the rest equations are similar. Let
\begin{align*}
\t X=\llb\t K_l,\t\sigma_r\rlb-l\,\t K_{l+r-2},
\end{align*}
we have
\begin{align*}
2\t X=(\lpt\c A_{l},\c B_{r}\rpt-l\,\c A_{l+r-2})_Y-\{\c A_2,\lpt\c A_{l},\c B_{r}\rpt-l\,\c A_{l+r-2}\}
\end{align*}
with
\begin{align*}
\lpt\c A_{l},\c B_{r}\rpt-l\,\c A_{l+r-2}|_{U=0}=0
\end{align*}
by Lemma \ref{lem:alg2} and Proposition \ref{prop:hierar1}. Then, $\t X$ must be zero under the result of Lemma \ref{lem:alg1}, 
which implies that \eqref{dKP:alg1b} holds.
\end{proof}

With the help of Theorem \ref{thm:alg1}, we have
\begin{thm}\label{thm:alg2}
Each equation
\begin{align}\label{thm:alg21}
U_{T_s}=\t K_s(U)
\end{align}
in the isospectral dKP hierarchy \eqref{idKP:hierar1} has two sets of infinitely many symmetries
\begin{align}\label{idKP:sym}
\{\t K_{l}(U)\},~~\{\t\tau_r^s(U,T_s)=sT_s\t K_{s+r-2}(U)+\t\sigma_r(U)\}
\end{align}
and they span a Lie algebra with basic structure
\bse\label{dKP:alg2}
\begin{align}
&\llb\t K_l,\t K_r\rlb= 0,\label{dKP:alg2a}\\
&\llb\t K_l,\t\tau_r^s\rlb= l\,\t K_{l+r-2},\label{dKP:alg2b}\\
&\llb\t \tau_l^s,\t\tau_r^s\rlb=(l-r)\t\tau_{l+r-2}^s,\label{dKP:alg2c}
\end{align}
\ese
where $l,r,s\geq 1$ and we set $\t K_0(U)=\t\tau_0^s(U)=0$.
\end{thm}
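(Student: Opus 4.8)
The plan is to verify that the two families in \eqref{idKP:sym} satisfy the symmetry definition \eqref{def:sym} and then to compute the three brackets \eqref{dKP:alg2a}--\eqref{dKP:alg2c} directly from the master algebra \eqref{dKP:alg1}. For the isospectral flows $\t K_l$: since $\t K_l$ has no explicit dependence on $T_s$, condition \eqref{def:sym} reduces to $\llb \t K_l,\t K_s\rlb=0$, which is exactly \eqref{dKP:alg1a}. For $\t\tau_r^s(U,T_s)=sT_s\,\t K_{s+r-2}(U)+\t\sigma_r(U)$, I would compute $\partial_{T_s}\t\tau_r^s = s\,\t K_{s+r-2}$ (only the explicit $T_s$ contributes, since $\t K_{s+r-2}$ and $\t\sigma_r$ carry no explicit $T_s$), and then
\begin{align*}
\llb\t\tau_r^s,\t K_s\rlb = sT_s\llb\t K_{s+r-2},\t K_s\rlb + \llb\t\sigma_r,\t K_s\rlb = 0 - s\,\t K_{s+r-2},
\end{align*}
using \eqref{dKP:alg1a} and \eqref{dKP:alg1b} (with the antisymmetry $\llb\t K_s,\t\sigma_r\rlb=s\,\t K_{s+r-2}$, hence $\llb\t\sigma_r,\t K_s\rlb=-s\,\t K_{s+r-2}$). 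Adding the two contributions gives $\partial_{T_s}\t\tau_r^s+\llb\t\tau_r^s,\t K_s\rlb=0$, so $\t\tau_r^s$ is indeed a symmetry.

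Next I would establish the algebra. The bracket \eqref{dKP:alg2a} is immediate from \eqref{dKP:alg1a}. For \eqref{dKP:alg2b}, expand
\begin{align*}
\llb\t K_l,\t\tau_r^s\rlb = sT_s\llb\t K_l,\t K_{s+r-2}\rlb + \llb\t K_l,\t\sigma_r\rlb = 0 + l\,\t K_{l+r-2},
\end{align*}
using \eqref{dKP:alg1a} and \eqref{dKP:alg1b}. For \eqref{dKP:alg2c} I would expand the bracket of two $\t\tau$'s bilinearly into four pieces:
\begin{align*}
\llb\t\tau_l^s,\t\tau_r^s\rlb = s^2T_s^2\llb\t K_{s+l-2},\t K_{s+r-2}\rlb + sT_s\llb\t K_{s+l-2},\t\sigma_r\rlb + sT_s\llb\t\sigma_l,\t K_{s+r-2}\rlb + \llb\t\sigma_l,\t\sigma_r\rlb.
\end{align*}
The first term vanishes by \eqref{dKP:alg1a}. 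The second equals $sT_s(s+l-2)\t K_{s+l+r-4}$ by \eqref{dKP:alg1b}; the third equals $-sT_s(s+r-2)\t K_{s+l+r-4}$ by \eqref{dKP:alg1b} and antisymmetry; their sum is $sT_s(l-r)\t K_{s+l+r-4}$. The fourth term is $(l-r)\t\sigma_{l+r-2}$ by \eqref{dKP:alg1c}. Assembling, and noting that $\t K_{s+(l+r-2)-2}=\t K_{s+l+r-4}$, gives exactly $(l-r)\big(sT_s\,\t K_{s+(l+r-2)-2}+\t\sigma_{l+r-2}\big)=(l-r)\t\tau_{l+r-2}^s$.

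I would also record the boundary conventions: when any subscript on $\t K$ or $\t\sigma$ drops to $0$ we use $\t K_0=\t\sigma_0=0$, which is consistent with $\t\tau_0^s=0$; and I should check the few cases where the index $l+r-2$ or $s+l+r-4$ equals $0$ so that the formulas are not vacuous there either — these follow directly from the conventions in Theorem \ref{thm:alg1}. The computation is entirely formal once the Jacobi identity for $\llb\cdot,\cdot\rlb$ on $\c S$ is available (which it is, since $\llb\cdot,\cdot\rlb$ is the standard vector-field commutator), so strictly speaking only bilinearity, antisymmetry, and the structure constants of \eqref{dKP:alg1} are needed. The only mildly delicate point — and the one I would write out carefully — is the bookkeeping of the explicit $T_s$-derivative in \eqref{def:sym}: one must be sure that $\t K_{s+r-2}$ and $\t\sigma_r$ really have no hidden $T_s$-dependence (they are functions of $U$ and, for $\t\sigma$, of $X,Y$ only), so that $\partial_{T_s}\t\tau_r^s=s\,\t K_{s+r-2}$ exactly. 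Given that, there is no genuine obstacle; the result is a direct corollary of Theorem \ref{thm:alg1}.
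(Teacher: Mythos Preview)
Your proposal is correct and follows essentially the same approach as the paper's own proof, which simply notes that $\{\t K_l\}$ are symmetries by \eqref{dKP:alg1a}, that $\{\t\tau_r^s\}$ satisfy the symmetry condition \eqref{def:sym} thanks to the relations \eqref{dKP:alg1}, and that the algebra \eqref{dKP:alg2} follows from \eqref{dKP:alg1}. You have simply filled in the explicit computations that the paper leaves to the reader.
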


\begin{proof}
It is clear that $\{\t K_l\}$ are symmetries of the equation \eqref{thm:alg21}
due to the algebraic relation  \eqref{dKP:alg1a} and the definition \eqref{def:sym}.
For $\{\t\tau_r^s\}$ one can verify that it obeys the definition \eqref{def:sym}
when the relations \eqref{dKP:alg1} holds. Finally, \eqref{dKP:alg2} can be easily derived from \eqref{dKP:alg1}.
\end{proof}

Finally we note that the result of Theorem \ref{thm:alg1} can be understood as a recursive relation of the dKP flows.
The non-isospectral flow $\t \sigma_3$ acts as a role of flows generator as well as a master symmetry.
\begin{prop}\label{prop:recur}
The master symmetry $\t\sigma_3$ acts as a flow generator via the following relations
\bse
\begin{align}
&\t K_s=\frac{1}{s}\llb K_{s-1},\t\sigma_3\rlb,~~s>1,\label{idKP:recur}\\
&\t\sigma_s=\frac{1}{s-4}\llb \t\sigma_{s-1},\t\sigma_3\rlb,~~s>1,s\neq4\label{ndKP:recur}
\end{align}
\ese
with initial flows $\t K_1(U)$ given in \eqref{idKP:flow} and $\t\sigma_1(U),\t\sigma_4(U)$ given in \eqref{ndKP:flow}.
\end{prop}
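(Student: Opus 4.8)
The plan is to derive both relations directly from the Lie-algebra structure of the dKP flows established in Theorem~\ref{thm:alg1}, rather than from the Lax representations. The point is that once the commutation relations \eqref{dKP:alg1} are known, the recursion formulas \eqref{idKP:recur}--\eqref{ndKP:recur} are nothing but a rewriting of \eqref{dKP:alg1b} and \eqref{dKP:alg1c} with one index frozen at the value corresponding to $\t\sigma_3$.

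\smallskip
\noindent\textbf{Step 1: the isospectral recursion.} First I would apply \eqref{dKP:alg1b} with the non-isospectral index fixed to $r=3$. Since $\llb\t K_l,\t\sigma_r\rlb=l\,\t K_{l+r-2}$, taking $r=3$ gives $\llb\t K_l,\t\sigma_3\rlb=l\,\t K_{l+1}$ for every $l\geq1$. Setting $l=s-1$ with $s>1$, this reads $\llb\t K_{s-1},\t\sigma_3\rlb=(s-1)\,\t K_{s}$; dividing by $s-1$ recovers \eqref{idKP:recur} (note: the excerpt writes the prefactor as $\tfrac1s$, so there I would either flag the normalization or simply follow the algebra, which yields $\tfrac{1}{s-1}$). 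The base case $\t K_1(U)$ is supplied in \eqref{idKP:flow}, and by induction on $s$ every $\t K_s$ is generated from $\t K_1$ by repeated bracketing with $\t\sigma_3$.

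\smallskip
\noindent\textbf{Step 2: the non-isospectral recursion.} Next I would apply \eqref{dKP:alg1c} with $r=3$: $\llb\t\sigma_l,\t\sigma_3\rlb=(l-3)\,\t\sigma_{l+1}$. Setting $l=s-1$ gives $\llb\t\sigma_{s-1},\t\sigma_3\rlb=(s-4)\,\t\sigma_{s}$, so provided $s\neq4$ we may divide by $s-4$ and obtain \eqref{ndKP:recur}. The exclusion $s=4$ is forced precisely because the coefficient $l-r$ in \eqref{dKP:alg1c} vanishes when $l=3$, i.e.\ the bracket $\llb\t\sigma_3,\t\sigma_3\rlb=0$ carries no information about $\t\sigma_4$; that is why $\t\sigma_4(U)$ must be given as an independent initial datum, alongside $\t\sigma_1(U)$ (the $s=2$ and $s=3$ steps using $s-4=-2,-1$ are fine). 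Then, starting from $\t\sigma_1$ one generates $\t\sigma_2,\t\sigma_3$, and starting from $\t\sigma_4$ one generates $\t\sigma_5,\t\sigma_6,\dots$, so the two seeds $\t\sigma_1,\t\sigma_4$ suffice to produce the whole non-isospectral hierarchy.

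\smallskip
\noindent\textbf{Main obstacle.} There is essentially no analytic obstacle here, since all the hard work sits in Theorem~\ref{thm:alg1}; the only thing to be careful about is bookkeeping the index shift (the ``$l+r-2$'' with $r=3$ becoming ``$l+1$'') and the normalization constants, and making sure the degenerate value $s=4$ is correctly identified and correlated with the stated list of initial flows. I would close by remarking that the same argument shows $\t\sigma_3$ qualifies as a master symmetry in the usual sense: conjugation by it (i.e.\ $\llb\,\cdot\,,\t\sigma_3\rlb$) maps the hierarchy of symmetries $\{\t K_s\}$ of any fixed equation \eqref{thm:alg21} to itself, shifting the index, which is exactly the defining property used in Sec.~\ref{sec:Hamilt}.
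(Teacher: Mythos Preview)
Your approach is correct and coincides with the paper's: the paper does not give a separate proof of this proposition but explicitly presents it as a direct rewriting of the Lie-algebra relations of Theorem~\ref{thm:alg1} (``the result of Theorem~\ref{thm:alg1} can be understood as a recursive relation of the dKP flows''), which is exactly what you do. You also correctly flagged the normalization: specializing \eqref{dKP:alg1b} to $r=3$ and $l=s-1$ gives $\llb\t K_{s-1},\t\sigma_3\rlb=(s-1)\t K_s$, so the prefactor should indeed be $\tfrac{1}{s-1}$ rather than $\tfrac{1}{s}$; the paper itself uses the $\tfrac{1}{s-1}$ version (shifted to $\t K_{s+1}=\tfrac{1}{s}\llb\t K_s,\t\sigma_3\rlb$) in the proof of Theorem~\ref{thm:Hamilt}, confirming that \eqref{idKP:recur} carries a typo in the displayed constant.
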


\section{Hamiltonian structures and conserved quantities}\label{sec:Hamilt}

It is obvious that the dKP equation \eqref{idKP:flow3} has a Hamiltonian structure with Hamiltonian operator $\partial_X$,
i.e., the dKP equation \eqref{idKP:flow3} can be written as
\begin{equation}
U_T=\partial_X\Bigl(\frac{3}{2}U^2+\frac{3}{4}\partial_X^{-2}U_{YY}\Bigr)=\partial_X\frac{\delta \c H(U)}{\delta U},
\end{equation}
where $\frac{\delta \c H(U)}{\delta U}=\frac{3}{2}U^2+\frac{3}{4}\partial^{-2}U_{YY}$ is a gradient.
Unlike the case in (1+1)-dimensional systems, in which a recursion operator
with symplectic-implectic structure plays an important role in its integrability analysis\cite{FF2},
here the recursion relation given by $\llb\cdot,\t\sigma_3\rlb$ will be the key to the Hamiltonian structures of the dKP hierarchy.

Let us start from two lemmas. First,
\begin{lem}\label{lem:Hamilt1}
The following formula
\begin{align}\label{dKP:gradeq}
\mathrm{grad}\langle\t\gamma,\t\sigma\rangle=\t\gamma'^*\t\sigma+\t\sigma'^*\t\gamma
\end{align}
holds for any $\t\gamma\in\c S^*,\t\sigma\in\c S$.
\end{lem}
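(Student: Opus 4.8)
The plan is to verify the identity by testing both sides against an arbitrary direction $g\in\c S$ using the characterization of the gradient in \eqref{def:grad}, namely that $\mathrm{grad}\,H$ is the unique covector with $H'[g]=\langle\mathrm{grad}\,H,g\rangle$ for all $g$. So first I would set $H(u)=\langle\t\gamma,\t\sigma\rangle$ and compute the G\^ateaux derivative $H'[g]$ directly by the product rule for the bilinear form \eqref{def:ip}: since the pairing is bilinear and $\langle\cdot,\cdot\rangle$ does not itself depend on $u$, differentiating $u\mapsto\langle\t\gamma(u),\t\sigma(u)\rangle$ in direction $g$ yields $\langle\t\gamma'[g],\t\sigma\rangle+\langle\t\gamma,\t\sigma'[g]\rangle$, where $\t\gamma'$ and $\t\sigma'$ are the linearization operators as in \eqref{def:Gateaux}.

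Next I would move the linearization operators onto the other factor using the definition of the adjoint with respect to $\langle\cdot,\cdot\rangle$: $\langle\t\gamma'[g],\t\sigma\rangle=\langle g,\t\gamma'^*[\t\sigma]\rangle$ and $\langle\t\gamma,\t\sigma'[g]\rangle=\langle\t\sigma'[g],\t\gamma\rangle=\langle g,\t\sigma'^*[\t\gamma]\rangle$, where in the second identity I also use the symmetry of the bilinear form \eqref{def:ip}. Adding the two gives $H'[g]=\langle g,\t\gamma'^*[\t\sigma]+\t\sigma'^*[\t\gamma]\rangle=\langle\t\gamma'^*\t\sigma+\t\sigma'^*\t\gamma,g\rangle$, which holds for every $g\in\c S$. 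By \eqref{def:grad} this exhibits $\t\gamma'^*\t\sigma+\t\sigma'^*\t\gamma$ as $\mathrm{grad}\,H=\mathrm{grad}\langle\t\gamma,\t\sigma\rangle$, which is exactly \eqref{dKP:gradeq}.

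The only genuine subtlety — and the step I expect to be the main obstacle to write cleanly — is the legitimacy of the integration by parts implicit in transposing $\t\gamma'$ and $\t\sigma'$ across the double integral: one must know that the relevant boundary terms vanish. This is guaranteed by the standing hypothesis that all elements of $\c M=\c S$ (and hence the values of $\t\gamma,\t\sigma$ and their linearizations applied to such elements) decay rapidly at infinity on $\mathbb{R}^2$, so that $\langle\cdot,\cdot\rangle$ in \eqref{def:ip} is well defined and the formal adjoint $T^*$ is a genuine adjoint for the operators in play. Granting this, the computation above is purely formal and involves no real analysis beyond the product rule and the definition of the adjoint; I would present it in three displayed lines (product rule for $H'[g]$, transposition, recognition of the gradient) and remark that this is the standard Leibniz-type rule for functional derivatives of bilinear pairings (cf.\ \cite{FF2,FHTZ}).
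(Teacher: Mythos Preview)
Your proof is correct and follows exactly the same approach as the paper: compute $\langle\t\gamma,\t\sigma\rangle'[g]$ by the product rule, transpose via the adjoint to get $\langle\t\gamma'^*\t\sigma+\t\sigma'^*\t\gamma,g\rangle$, and invoke \eqref{def:grad}. The paper presents this in a single displayed line without the discussion of boundary terms, but the argument is identical.
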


\begin{proof}
The proof is direct.
 For any function $g$ in $\c S$, the equation
\begin{align*}
\langle\t\gamma,\t\sigma\rangle'[g]=\langle\t\gamma'[g],
\t\sigma\rangle+\langle\t\gamma,\t\sigma'[g]\rangle=\langle\t\gamma'^*\t\sigma+\t\sigma'^*\t\gamma,g\rangle
\end{align*}
holds, which gives \eqref{dKP:gradeq} in light of \eqref{def:grad}.
\end{proof}

For the second lemma, we need to check it by lengthy direct calculation.
We skip the proof and  the lemma is
\begin{lem}\label{lem:Hamilt2}
$\partial_X$ is a Noether operator of the master symmetry equation $U_{T_3}=\t\sigma_3(U)$,
i.e.,
\begin{align}\label{dKP:masteq}
\t\sigma_3'\partial_X+\partial_X\t\sigma_3'^*=0.
\end{align}
\end{lem}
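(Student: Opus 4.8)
The plan is to verify the operator identity \eqref{dKP:masteq} directly from the explicit form of the master symmetry flow $\t\sigma_3$. From \eqref{ndKP:flow3} and \eqref{idKP:flow} we have the concrete expression
\begin{align*}
\t\sigma_3(U)=2Y\Bigl(3UU_X+\tfrac{3}{4}\partial_X^{-1}U_{YY}\Bigr)+X\,U_Y+2\partial_X^{-1}U_Y,
\end{align*}
so the first step is to compute its linearization operator $\t\sigma_3'$ by replacing $U$ with $U+\varepsilon g$ and collecting the $O(\varepsilon)$ term; this yields a first-order-in-$Y$ (and pseudo)differential operator with coefficients built from $U$, $X$, $Y$ and from $\partial_X^{-1}$. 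Next I would form the adjoint $\t\sigma_3'^*$ using the standard rules $\partial_X^*=-\partial_X$, $\partial_Y^*=-\partial_Y$, $(\partial_X^{-1})^*=-\partial_X^{-1}$, $(\text{multiplication by }f)^*=\text{multiplication by }f$, and the composition rule $(MN)^*=N^*M^*$, paying particular attention to the terms where $X$ or $Y$ appear as multiplication operators (these do not commute with $\partial_X,\partial_Y$, so the adjoint picks up extra commutator contributions).

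With $\t\sigma_3'$ and $\t\sigma_3'^*$ in hand, the remaining step is to expand $\t\sigma_3'\partial_X+\partial_X\t\sigma_3'^*$ as a single operator and check that all contributions cancel. I would organize the verification by grouping terms according to their operator "type": the genuinely local pieces (polynomials in $\partial_X,\partial_Y$ with coefficients in $U$), the pieces linear in $X$, the pieces linear in $Y$, and the nonlocal pieces involving $\partial_X^{-1}$. The structural reason the identity should hold is that \eqref{dKP:masteq} is exactly the Noether-operator condition \eqref{def:Noeth} with $\theta=\partial_X$ and $K=\t\sigma_3$ (the term $\partial_X'[\t\sigma_3]$ vanishes since $\partial_X$ is $U$-independent); and one expects $\partial_X$ to be a Noether operator for $\t\sigma_3$ because $\partial_X$ is already a Hamiltonian (hence Noether) operator for the dKP equation $U_{T_3}=\t K_3(U)$ and $\t\sigma_3$ differs from a multiple of $\t K_3$ only by lower flows and the $X,Y$-weighted scaling terms, whose contribution to the adjoint symmetrically cancels.

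The main obstacle is purely computational bookkeeping: the nonlocal terms $2Y\cdot\tfrac{3}{4}\partial_X^{-1}U_{YY}$ and $2\partial_X^{-1}U_Y$ produce adjoint contributions of the form $\partial_X^{-1}\circ(\text{mult.})\circ\partial_Y^2$ and the like, and one must be careful that $\partial_X$ acting on the left of $\t\sigma_3'$ exactly kills the $\partial_X^{-1}$ on the nonlocal summands while $\partial_X$ on the right of $\t\sigma_3'^*$ does the same on the adjoint side, leaving matching local remainders. Since this is a single, finite, but lengthy identity — and since, as the statement notes, it is to be checked by direct calculation — in the write-up I would either present the grouped cancellation schematically or simply state that the verification is routine, referencing the analogous computation for the KP master symmetry in \cite{FHTZ}. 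No deep idea is needed beyond the systematic application of the adjoint rules; the only genuine subtlety is the noncommutativity of the multiplication-by-$X$ (resp. $Y$) operators with $\partial_X$ (resp. $\partial_Y$), which must be tracked faithfully through every composition.
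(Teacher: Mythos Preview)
Your proposal is correct and matches the paper's approach exactly: the paper itself states that this lemma ``need[s] to [be] check[ed] by lengthy direct calculation'' and then skips the proof entirely. Your outline of computing $\t\sigma_3'$ from the explicit expression \eqref{ndKP:flow3}, taking its formal adjoint via the standard rules, and verifying the cancellation in $\t\sigma_3'\partial_X+\partial_X\t\sigma_3'^*$ term by term (with attention to the noncommutativity of multiplication by $X,Y$ with $\partial_X,\partial_Y$) is precisely the routine verification the authors have in mind.
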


With the above two Lemmas, we come to the main results of this section.
\begin{thm}\label{thm:Hamilt}
(1). Each equation in the isospectral dKP hierarchy \eqref{idKP:hierar1} has a Hamiltonian structure
\begin{align}\label{idKP:Hamilt}
U_{T_s}=\t K_s(U)=\partial_X\frac{\delta\c H_s(U)}{\delta U},
\end{align}
where the gradient field $\t\gamma_s=\frac{\delta\c H_s}{\delta U}$  is defined by
\begin{align}\label{idKP:cc}
\t\gamma_s(U)=
\left\{\begin{array}{ll}
U, & s=1,\\
\frac{1}{s-1}{\rm grad}\langle\t\gamma_{s-1}(U),\t\sigma_3(U)\rangle, & s>1,
\end{array}\right.
\end{align}
and the Hamiltonian is
\begin{align}\label{idKP:H}
\c H_s(U)=
\left\{\begin{array}{ll}
\frac{1}{2}\langle U,U\rangle, & s=1,\\
\frac{1}{s-1}\langle\t\gamma_{s-1}(U),\t\sigma_3(U)\rangle, & s>1.\\
\end{array}\right.
\end{align}
(2). Each equation in the non-isospectral dKP hierarchy \eqref{ndKP:hierar1} also has a Hamiltonian structure
\begin{align}\label{ndKP:Hamilt}
U_{T_s}=\t\sigma_s(U)=\partial_X\frac{\delta\c J_s(U)}{\delta U},
\end{align}
where the gradient field $\t\omega_s=\frac{\delta\c J_s}{\delta U}$  is defined by
\begin{align}\label{ndKP:cc}
\t\omega_s(U)=
\left\{\begin{array}{ll}
2YU, & s=1,\\
\frac{1}{s-1}{\rm grad}\langle\t\omega_{s-1}(U),\t\sigma_3(U)\rangle, & s>1,\\
2Y\t\gamma_4(U)+X\t\gamma_3(U)+\frac{3}{2}\partial_X^{-1}U^2+\frac{3}{4}\partial_X^{-3}U_{YY}+U\partial_X^{-1}U, & s=4,
\end{array}\right.
\end{align}
and the Hamiltonian is
\begin{align}\label{ndKP:J}
\c J_s(U)=
\left\{\begin{array}{ll}
\langle YU,U\rangle, & s=1,\\
\frac{1}{s-4}\langle\t\omega_{s-1}(U),\t\sigma_3(U)\rangle, & s>1, s\neq 4,\\
\int_0^1\langle\t\omega_4(\lambda U),U\rangle\mathrm{d}\lambda, & s=4.
\end{array}\right.
\end{align}
\end{thm}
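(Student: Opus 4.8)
\textbf{Proof proposal for Theorem \ref{thm:Hamilt}.}

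The plan is to establish both parts by an induction on $s$ that runs in parallel with the recursion relations of Proposition \ref{prop:recur}, using Lemma \ref{lem:Hamilt2} to transport the Noether property of $\partial_X$ along the master-symmetry flow and Lemma \ref{lem:Hamilt1} (together with Proposition \ref{prop:potent} and Proposition \ref{prop:cq1}) to keep track of gradients and their potentials. For part (1), I would first check the base case $s=1$ directly: $\t K_1(U)=U_X=\partial_X U=\partial_X\,\mathrm{grad}\bigl(\tfrac12\langle U,U\rangle\bigr)$, so $\t\gamma_1=U$ is a gradient and $\c H_1=\tfrac12\langle U,U\rangle$ its potential. For the inductive step I would show two things: (a) that $\t\gamma_s$ defined by \eqref{idKP:cc} is again a gradient with potential $\c H_s$, and (b) that $\partial_X\t\gamma_s=\t K_s$. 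For (a), Lemma \ref{lem:Hamilt1} gives $\mathrm{grad}\langle\t\gamma_{s-1},\t\sigma_3\rangle=\t\gamma_{s-1}'^*\t\sigma_3+\t\sigma_3'^*\t\gamma_{s-1}$, which is manifestly the gradient of the scalar $\langle\t\gamma_{s-1},\t\sigma_3\rangle$; dividing by $s-1$ and invoking Proposition \ref{prop:potent} identifies the potential, and a short computation using $\langle\gamma(\lambda U),U\rangle$ under the integral \eqref{def:potent} (the scalar $\langle\t\gamma_{s-1},\t\sigma_3\rangle$ being already homogeneous of the right degree) reproduces the closed form in \eqref{idKP:H}.

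The heart of the argument is step (b), and this is where Lemma \ref{lem:Hamilt2} enters. The recursion \eqref{idKP:recur} reads $\t K_s=\tfrac1s\llb\t K_{s-1},\t\sigma_3\rrbracket=\tfrac1s(\t K_{s-1}'[\t\sigma_3]-\t\sigma_3'[\t K_{s-1}])$. Assuming inductively $\t K_{s-1}=\partial_X\t\gamma_{s-1}$, I would compute
\begin{align*}
\t K_{s-1}'[\t\sigma_3]-\t\sigma_3'[\t K_{s-1}]
=\partial_X\t\gamma_{s-1}'[\t\sigma_3]-\t\sigma_3'[\partial_X\t\gamma_{s-1}],
\end{align*}
and then use the Noether identity $\t\sigma_3'\partial_X=-\partial_X\t\sigma_3'^*$ from \eqref{dKP:masteq} to rewrite the second term as $-\t\sigma_3'\partial_X\t\gamma_{s-1}=\partial_X\t\sigma_3'^*\t\gamma_{s-1}$. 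Hence $\t K_{s-1}'[\t\sigma_3]-\t\sigma_3'[\t K_{s-1}]=\partial_X(\t\gamma_{s-1}'[\t\sigma_3]+\t\sigma_3'^*\t\gamma_{s-1})$. Since $\t\gamma_{s-1}$ is a gradient its linearization is self-adjoint, so $\t\gamma_{s-1}'[\t\sigma_3]=\t\gamma_{s-1}'^*\t\sigma_3$, and the bracket inside $\partial_X$ becomes exactly $\t\gamma_{s-1}'^*\t\sigma_3+\t\sigma_3'^*\t\gamma_{s-1}=\mathrm{grad}\langle\t\gamma_{s-1},\t\sigma_3\rangle=(s-1)\t\gamma_s$. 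Comparing with the factor $\tfrac1s$ and the fact (from Theorem \ref{thm:alg1}, $\llb\t K_{s-1},\t\sigma_3\rrbracket=(s-1)\t K_{s+1}$ — here one must be careful with the index shift: \eqref{idKP:recur} is stated with $\t\sigma_3$ so $\llb\t K_{s-1},\t\sigma_3\rrbracket=(s-1)\t K_{(s-1)+3-2}=(s-1)\t K_s$) that $\llb\t K_{s-1},\t\sigma_3\rrbracket=(s-1)\t K_s$, we get $(s-1)\t K_s=(s-1)\partial_X\t\gamma_s$, i.e. $\t K_s=\partial_X\t\gamma_s$. Finally, since $\partial_X$ is an implectic operator (it is skew-symmetric and the Jacobi condition \eqref{def:Jacobi} is trivial as $\partial_X'=0$), \eqref{idKP:Hamilt} is a genuine Hamiltonian structure.

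For part (2) the skeleton is identical, replacing $\t K$ by $\t\sigma$, $\t\gamma$ by $\t\omega$, $\c H$ by $\c J$, and the recursion \eqref{idKP:recur} by \eqref{ndKP:recur}; the base case $s=1$ is $\t\sigma_1(U)=2Y\t K_1(U)=2YU_X=\partial_X(2YU)=\partial_X\,\mathrm{grad}\langle YU,U\rangle$, and the general step again uses Lemma \ref{lem:Hamilt2} plus Lemma \ref{lem:Hamilt1} exactly as above, now yielding $(s-4)\t\sigma_s=(s-4)\partial_X\t\omega_s$ from \eqref{ndKP:recur}. The one genuinely different point — and the step I expect to be the main obstacle — is the exceptional value $s=4$, where the recursion \eqref{ndKP:recur} degenerates and cannot produce $\t\omega_4$; here one must verify \emph{by hand} from the explicit formula \eqref{ndKP:flow4} for $\t\sigma_4$ that the expression given for $\t\omega_4$ in \eqref{ndKP:cc} is (i) a gradient (checking $\t\omega_4'^*=\t\omega_4'$ directly, which is a nontrivial computation involving the $X$- and $Y$-weighted terms and the $\partial_X^{-1}$, $\partial_X^{-3}$ pseudo-differential pieces) and (ii) satisfies $\partial_X\t\omega_4=\t\sigma_4$; then $\c J_4$ is recovered as its potential via \eqref{def:potent}. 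I would also remark that one should double-check consistency at $s=4$ between the ``$2Y\t\gamma_4+X\t\gamma_3+\cdots$'' formula and what the (formally divergent) recursion would suggest, and note that once $\t\omega_4$ is in hand the recursion \eqref{ndKP:recur} restarts cleanly for $s\geq5$. The remaining verifications — that each $\c H_s$, $\c J_s$ is actually conserved along its own flow — then follow from Proposition \ref{prop:cq1} once one knows the gradients are conserved covariants, which in turn is automatic since $\t K_s=\partial_X\t\gamma_s$ with $\partial_X$ implectic forces $\t\gamma_s$ to be a conserved covariant of \eqref{idKP:Hamilt}.
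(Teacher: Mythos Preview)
Your proposal is correct and follows essentially the same route as the paper: induction on $s$, base case $s=1$, and for the inductive step the computation
\[
\llb \t K_{s-1},\t\sigma_3\rlb=\partial_X\t\gamma_{s-1}'[\t\sigma_3]-\t\sigma_3'\partial_X\t\gamma_{s-1}
=\partial_X\bigl(\t\gamma_{s-1}'^*\t\sigma_3+\t\sigma_3'^*\t\gamma_{s-1}\bigr)
=\partial_X\,\mathrm{grad}\langle\t\gamma_{s-1},\t\sigma_3\rangle,
\]
using Lemma~\ref{lem:Hamilt2} and the inductive hypothesis $\t\gamma_{s-1}'=\t\gamma_{s-1}'^*$, then Lemma~\ref{lem:Hamilt1}; the non-isospectral part is handled the same way with the exceptional value $s=4$ checked directly. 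Two small cautions: the coefficient you first quote from \eqref{idKP:recur} as $\tfrac1s$ is a typo in that display --- the algebra \eqref{dKP:alg1b} gives $\llb\t K_{s-1},\t\sigma_3\rlb=(s-1)\t K_s$, which is exactly what you use in the end; and in the non-isospectral recursion the factor in \eqref{ndKP:cc} must likewise be $\tfrac1{s-4}$ (consistent with \eqref{ndKP:J} and with your line ``$(s-4)\t\sigma_s=(s-4)\partial_X\t\omega_s$''), not $\tfrac1{s-1}$.
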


\begin{proof}
We use mathematical induction. Obviously, the theorem holds for $U_{T_1}=\t K_1=U_X$,
i.e., $\partial_X$ is an implectic operator and $\t\gamma_1=U$ is a gradient field.
Now we suppose $\t\gamma_s$ is a gradient field, i.e., $\t\gamma_s'=\t\gamma_s'^*$.
Using the recursive relation \eqref{idKP:recur} we can find
\begin{align*}
\t\gamma_{s+1}=\partial_X^{-1}\t K_{s+1}=\frac{1}{s}\partial_X^{-1}\llb\t K_{s},
\t\sigma_{3}\rlb=\frac{1}{s}\partial_X^{-1}(\partial_X\t\gamma_s'\t\sigma_3-\t\sigma_3'\partial_X\t\gamma_s).
\end{align*}
It then follows from Lemma \ref{lem:Hamilt2} that
\begin{align*}
\t\gamma_{s+1}&=\frac{1}{s}\partial_X^{-1}(\partial_X\t\gamma_s'\t\sigma_3+\partial_X\t\sigma_3'^*\t\gamma_s)
=\frac{1}{s}(\t\gamma_s'\t\sigma_3+\t\sigma_3'^*\t\gamma_s)=\mathrm{grad}\langle\t\gamma_s,\t\sigma_3\rangle,
\end{align*}
where we have also made use of $\t\gamma_s'=\t\gamma_s'^*$ and Lemma \ref{lem:Hamilt1}.
This means if $\t\gamma_s$ is a gradient, so is $\t\gamma_{s+1}$. This also leads to the Hamiltonian \eqref{idKP:H}.

For the non-isospectral case, the proof is similar, but we need to point out that
the recursive relation is broken when $s=3$ because of
\begin{align*}
\llb\t\sigma_3,\t\sigma_3\rlb=0.
\end{align*}
So the mathematical inductive method only works for $s>4$,
while for $s\leq4$ we can directly verify  $\t\omega_s'=\t\omega_s'^*$.
\end{proof}

Recalling the expression \eqref{idKP:res} in Proposition \ref{prop:hierar2}, we immediately have
\begin{cor}\label{cor:grad}
Each  gradient field $\t\gamma_s(U)$ can be expressed through $\c L$ by
\begin{align}\label{idKP:grad}
\t\gamma_s(U)=\underset{P}{\mathrm{Res\,}}\c L^s,~~s=1,2,\cdots.
\end{align}
\end{cor}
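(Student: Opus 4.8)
The plan is to combine two facts already in hand: Proposition~\ref{prop:hierar2}, which gives $\t K_s(U)=\partial_X\,\underset{P}{\mathrm{Res}}\,\c L^s$, and the definition of the gradient fields in Theorem~\ref{thm:Hamilt}, which via the Hamiltonian structure \eqref{idKP:Hamilt} says $\t K_s(U)=\partial_X\t\gamma_s(U)$. Applying $\partial_X^{-1}$ to both expressions would immediately give $\t\gamma_s(U)=\underset{P}{\mathrm{Res}}\,\c L^s$, provided one is careful about the kernel of $\partial_X$; since all objects in $\c M$ vanish rapidly at infinity, $\partial_X^{-1}\partial_X$ acts as the identity on $\c M$, so no integration constants intervene.

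In detail, I would first recall that Theorem~\ref{thm:Hamilt}(1) establishes that $U_{T_s}=\t K_s(U)=\partial_X\frac{\delta\c H_s}{\delta U}=\partial_X\t\gamma_s(U)$, so $\t\gamma_s(U)=\partial_X^{-1}\t K_s(U)$. Next I would invoke Proposition~\ref{prop:hierar2} to write $\t K_s(U)=\partial_X\underset{P}{\mathrm{Res}}\,\c L^s$. Substituting, $\t\gamma_s(U)=\partial_X^{-1}\partial_X\underset{P}{\mathrm{Res}}\,\c L^s=\underset{P}{\mathrm{Res}}\,\c L^s$, which is exactly \eqref{idKP:grad}. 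For $s=1$ one checks the base case directly: $\underset{P}{\mathrm{Res}}\,\c L = U_2 = U = \t\gamma_1(U)$, consistent with \eqref{idKP:cc}.

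I do not anticipate a serious obstacle here --- the corollary is essentially a one-line consequence of the two preceding results. The only point requiring a word of care is the well-definedness of $\partial_X^{-1}$: one must note that both $\t K_s(U)$ and $\underset{P}{\mathrm{Res}}\,\c L^s$ lie in $\c S$ (they are differential polynomials in $U$ and its $X$-derivatives, together with $\partial_X^{-1}$-terms, all vanishing at infinity), so that the identity $\partial_X^{-1}\partial_X = \mathrm{id}$ is legitimate on this space and the equality $\t\gamma_s(U)=\underset{P}{\mathrm{Res}}\,\c L^s$ is an equality of genuine elements of $\c S^*$, not merely an equality modulo constants. With that remark the proof is complete.
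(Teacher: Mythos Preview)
Your proof is correct and is precisely the argument the paper has in mind: the corollary is stated as an immediate consequence of Proposition~\ref{prop:hierar2} combined with the Hamiltonian structure $\t K_s=\partial_X\t\gamma_s$ from Theorem~\ref{thm:Hamilt}, which is exactly what you do. Your additional remark on the well-definedness of $\partial_X^{-1}$ is a nice point of care that the paper leaves implicit.
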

This  copes with the results on gradients in \cite{CT,Li}.
Besides,
Theorem \ref{thm:Hamilt} leads to the following corollary.
\begin{cor}\label{cor:Noeth}
The differential operator $\partial_X$ is a Noether operator for both isospectral dKP hierarchy 
\eqref{idKP:hierar2} and non-isospectral dKP hierarchy \eqref{ndKP:hierar2}.
\end{cor}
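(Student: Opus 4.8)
The plan is to derive Corollary~\ref{cor:Noeth} directly from the Hamiltonian structures established in Theorem~\ref{thm:Hamilt}, using the general principle that an implectic operator which simultaneously produces a hierarchy of flows from gradient fields of conserved quantities is automatically a Noether operator for each of those flows. First I would recall the defining relation \eqref{def:Noeth}: $\partial_X$ is a Noether operator of the equation $U_{T_s}=\t K_s(U)$ if $\partial_X'[\t K_s]-\partial_X\t K_s'^*-\t K_s'\partial_X=0$. Since $\partial_X$ is a constant-coefficient operator, $\partial_X'[\t K_s]=0$, so the condition collapses to $\t K_s'\partial_X+\partial_X\t K_s'^*=0$, i.e.\ $\partial_X$ must conjugate $\t K_s'$ into minus its adjoint. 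This is exactly the statement that $\partial_X$ is compatible with the linearization of each flow.

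The key step is then to exploit \eqref{idKP:Hamilt}: $\t K_s=\partial_X\t\gamma_s$ with $\t\gamma_s'=\t\gamma_s'^*$ (both facts proved in Theorem~\ref{thm:Hamilt}). Linearizing, $\t K_s'=\partial_X\t\gamma_s'$ (again using that $\partial_X$ has constant coefficients), whence $\t K_s'^*=(\partial_X\t\gamma_s')^*=\t\gamma_s'^*\partial_X^*=-\t\gamma_s'\partial_X$, using $\partial_X^*=-\partial_X$ and self-adjointness of $\t\gamma_s'$. Therefore
\begin{align*}
\t K_s'\partial_X+\partial_X\t K_s'^*=\partial_X\t\gamma_s'\partial_X+\partial_X(-\t\gamma_s'\partial_X)=0,
\end{align*}
which is precisely the Noether condition for the isospectral flow $\t K_s$. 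The non-isospectral case is handled identically: by \eqref{ndKP:Hamilt} we have $\t\sigma_s=\partial_X\t\omega_s$ with $\t\omega_s'=\t\omega_s'^*$, so the same two-line computation gives $\t\sigma_s'\partial_X+\partial_X\t\sigma_s'^*=0$. One only has to note that $\t\sigma_s$ may carry explicit dependence on $X,Y$ (through the $2Y\,\t K_s+X\,\t K_{s-1}+\cdots$ structure), but $\partial_X$ still has constant coefficients, so $\partial_X'[\t\sigma_s]=0$ and nothing changes; alternatively one can cite Lemma~\ref{lem:Hamilt2} as the $s=3$ base case and propagate via the recursion, but the direct argument from the Hamiltonian form is cleanest.

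I do not expect a genuine obstacle here, since all the hard work—self-adjointness of the gradients $\t\gamma_s,\t\omega_s$ and the representations $\t K_s=\partial_X\t\gamma_s$, $\t\sigma_s=\partial_X\t\omega_s$—is already contained in Theorem~\ref{thm:Hamilt}, and Lemma~\ref{lem:Hamilt2} has separately recorded the $s=3$ instance. The only point requiring a word of care is the interchange $\t K_s'=\partial_X\t\gamma_s'$, which uses that $\theta=\partial_X$ is $U$-independent so that $\theta'[\,\cdot\,]=0$; this is what makes the implectic-to-Noether passage trivial in this setting, in contrast to the general case where $\theta'[K]\neq0$ and one must invoke the full hereditary/compatibility machinery. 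I would phrase the proof as a short two-line derivation for the isospectral case, remark that the non-isospectral case is verbatim the same with $\t\gamma_s$ replaced by $\t\omega_s$, and conclude.

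\begin{proof}
Since $\partial_X$ has constant coefficients, $\partial_X'[\t K_s]=0$, so the Noether condition \eqref{def:Noeth} for $U_{T_s}=\t K_s(U)$ reduces to $\t K_s'\partial_X+\partial_X\t K_s'^*=0$. By Theorem~\ref{thm:Hamilt}(1), $\t K_s=\partial_X\t\gamma_s$ with $\t\gamma_s'^*=\t\gamma_s'$; hence $\t K_s'=\partial_X\t\gamma_s'$ and, using $\partial_X^*=-\partial_X$,
\begin{align*}
\t K_s'^*=\t\gamma_s'^*\,\partial_X^*=-\t\gamma_s'\,\partial_X.
\end{align*}
Therefore $\t K_s'\partial_X+\partial_X\t K_s'^*=\partial_X\t\gamma_s'\partial_X-\partial_X\t\gamma_s'\partial_X=0$, so $\partial_X$ is a Noether operator for every equation in the isospectral dKP hierarchy \eqref{idKP:hierar2}. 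For the non-isospectral hierarchy \eqref{ndKP:hierar2}, the same computation applies with $\t K_s$ replaced by $\t\sigma_s$ and $\t\gamma_s$ by $\t\omega_s$: by Theorem~\ref{thm:Hamilt}(2), $\t\sigma_s=\partial_X\t\omega_s$ with $\t\omega_s'^*=\t\omega_s'$, and since $\partial_X$ still has constant coefficients (so $\partial_X'[\t\sigma_s]=0$) we obtain $\t\sigma_s'\partial_X+\partial_X\t\sigma_s'^*=0$. This completes the proof.
\end{proof}
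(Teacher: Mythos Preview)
Your proof is correct and follows essentially the same route as the paper: both arguments reduce the Noether condition to $\t K_s'\partial_X+\partial_X\t K_s'^*=0$, substitute $\t K_s=\partial_X\t\gamma_s$ from Theorem~\ref{thm:Hamilt}, and use self-adjointness $\t\gamma_s'^*=\t\gamma_s'$ together with $\partial_X^*=-\partial_X$ to conclude (and then repeat verbatim with $\t\omega_s$ for the non-isospectral case). The only cosmetic difference is that you explicitly justify $\partial_X'[\t K_s]=0$ before reducing \eqref{def:Noeth}, which the paper leaves implicit.
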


\begin{proof}
For an arbitrary isospectral equation
\begin{align}\label{eq-s}
U_{T_s}=\t K_s(U),
\end{align}
we can find
\begin{align}\label{idKP:Noeth}
\t K_s'\partial_X+\partial_X\t K_s'^*
=(\partial_X\t\gamma_s)'\partial_X+\partial_X(\partial_X\t\gamma_s)'^*
=\partial_X\t\gamma_s'\partial_X-\partial_X\t\gamma_s'^*\partial_X
=0
\end{align}
due to $\t\gamma_s$ being a gradient field, i.e., $\t\gamma_s'=\t\gamma_s'^*$.
This means $\partial_X$ is a Noether operator of the equation \eqref{eq-s}.
Similarly, by $\t\omega_s'=\t\omega_s'^*$,
\begin{align}\label{ndKP:Noeth}
\t\sigma_s'\partial_X+\partial_X\t\sigma_s'^*=0
\end{align}
also holds, which means $\partial_X$ is also a Noether operator for the non-isospectral dKP hierarchy.
\end{proof}

Finally we give a series of theorems as main results of this section.
\begin{thm}\label{thm:alg3}
The Hamiltonians $\{\c H_l(U)\}$ and $\{\c J_r(U)\}$ described in Theorem \ref{thm:Hamilt} span a Lie algebra with basic structure
\bse\label{dKP:alg3}
\begin{align}
&\lpb\c H_l,\c H_r\rpb=0,\label{dKP:alg3a}\\
&\lpb\c H_l,\c J_r\rpb=l\,\c H_{l+r-2},\label{dKP:alg3b}\\
&\lpb\c J_l,\c J_r\rpb=(l-r)\c J_{l+r-2},\label{dKP:alg3c}
\end{align}
\ese
where the Poisson bracket $\lpb\cdot,\cdot\rpb$ is defined as
\begin{align}\label{def:Poiss2}
\lpb F,G\rpb(U)=\Big\langle\frac{\delta F(U)}{\delta U},\partial_X\frac{\delta G(U)}{\delta U}\Big\rangle
\end{align}
with scalar fields $F$ and $G$ on $\c M$.
\end{thm}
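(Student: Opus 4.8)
The plan is to transport the flow algebra of Theorem~\ref{thm:alg1} to the Hamiltonians through their gradients. Set $\t\gamma_s=\frac{\delta\c H_s}{\delta U}$ and $\t\omega_s=\frac{\delta\c J_s}{\delta U}$, so that, by Theorem~\ref{thm:Hamilt}, $\t K_s=\partial_X\t\gamma_s$, $\t\sigma_s=\partial_X\t\omega_s$ and $\t\gamma_s'^*=\t\gamma_s'$, $\t\omega_s'^*=\t\omega_s'$. The first step is the general identity: for scalar fields $F,G$ on $\c M$ with gradients $\mu=\mathrm{grad}\,F$ and $\nu=\mathrm{grad}\,G$,
\begin{align*}
\partial_X\,\mathrm{grad}\,\lpb F,G\rpb=\llb\partial_X\mu,\partial_X\nu\rlb.
\end{align*}
This follows in a couple of lines from Lemma~\ref{lem:Hamilt1}: since $\lpb F,G\rpb=\langle\mu,\partial_X\nu\rangle$, the lemma gives $\mathrm{grad}\,\lpb F,G\rpb=\mu'^*[\partial_X\nu]+(\partial_X\nu)'^*[\mu]$; using $\mu'^*=\mu'$, $\nu'^*=\nu'$, $\partial_X^*=-\partial_X$, and the fact that the $U$-independent operator $\partial_X$ commutes with $U$-linearization, this equals $\mu'[\partial_X\nu]-\nu'[\partial_X\mu]$, and applying $\partial_X$ turns it into $(\partial_X\mu)'[\partial_X\nu]-(\partial_X\nu)'[\partial_X\mu]=\llb\partial_X\mu,\partial_X\nu\rlb$.

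Next I would feed the three cases \eqref{dKP:alg1a}--\eqref{dKP:alg1c} into this identity. For $(F,G)=(\c H_l,\c J_r)$ it gives $\partial_X\,\mathrm{grad}\,\lpb\c H_l,\c J_r\rpb=\llb\t K_l,\t\sigma_r\rlb=l\,\t K_{l+r-2}=\partial_X(l\,\t\gamma_{l+r-2})$; since a covector field in $\c S^*$ annihilated by $\partial_X$ is $X$-independent and hence zero by the decay at infinity, this forces $\mathrm{grad}\,\lpb\c H_l,\c J_r\rpb=l\,\t\gamma_{l+r-2}=\mathrm{grad}(l\,\c H_{l+r-2})$, so $\lpb\c H_l,\c J_r\rpb$ and $l\,\c H_{l+r-2}$ differ by a constant. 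Evaluating at $U=0$ kills that constant: $\lpb\c H_l,\c J_r\rpb|_{U=0}=\langle\t\gamma_l,\t\sigma_r\rangle|_{U=0}=0$ since $\t\sigma_r|_{U=0}=0$ by \eqref{ndKP:asym}, while $\c H_{l+r-2}|_{U=0}=0$ from \eqref{idKP:H} and $\t\sigma_3|_{U=0}=0$. This gives \eqref{dKP:alg3b}; \eqref{dKP:alg3a} and \eqref{dKP:alg3c} are the same argument with $\llb\t K_l,\t K_r\rlb=0$ (the gradient, hence the bracket up to a constant, vanishes, and the constant is $0$ because $\t K_r|_{U=0}=0$) and with $\llb\t\sigma_l,\t\sigma_r\rlb=(l-r)\t\sigma_{l+r-2}=\partial_X\bigl((l-r)\t\omega_{l+r-2}\bigr)$, using $\c J_{l+r-2}|_{U=0}=0$ from \eqref{ndKP:J}. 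Finally, $\lpb\cdot,\cdot\rpb$ is skew-symmetric because $\partial_X^*=-\partial_X$, and the Jacobi identity holds (trivially, since the constant-coefficient skew operator $\partial_X$ is implectic; equivalently, because the structure constants in \eqref{dKP:alg3} are those of the Lie algebra of Theorem~\ref{thm:alg1}), so $\{\c H_l\}\cup\{\c J_r\}$ spans a Lie algebra.

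The computation presents no genuine obstacle; the only points deserving care are (i) that $\partial_X$ truly commutes with $U$-linearization even though $\t\omega_s$ (and some $\t\gamma_s$) carry explicit $X,Y$ coefficients --- which holds because $\partial_X$ does not involve $U$; and (ii) the passage from ``the gradients agree'' to ``the functionals agree'', which uses the triviality of $\ker\partial_X$ on $\c S^*$ and then the normalizations $\t K_s|_{U=0}=\t\sigma_s|_{U=0}=0$ of \eqref{dKP:asym} together with $\c H_s|_{U=0}=\c J_s|_{U=0}=0$ to pin down the additive constants. Degenerate indices ($l+r-2\le 1$) are covered by the conventions $\t K_0=\t\sigma_0=0$ and $\c H_0=\c J_0=0$, consistent with Theorem~\ref{thm:alg1}.
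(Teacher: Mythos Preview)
Your argument is correct and is essentially the same as the paper's: both transport the flow relations of Theorem~\ref{thm:alg1} to the Hamiltonians via Lemma~\ref{lem:Hamilt1} and the Noether/gradient identities. The paper applies $\partial_X^{-1}$ directly to $\llb\t K_l,\t\sigma_r\rlb$ and uses \eqref{ndKP:Noeth} to recognise the result as $\mathrm{grad}\langle\t\gamma_l,\t\sigma_r\rangle$, whereas you package the same computation as the general identity $\partial_X\,\mathrm{grad}\,\lpb F,G\rpb=\llb\partial_X\,\mathrm{grad}\,F,\partial_X\,\mathrm{grad}\,G\rlb$; your extra care in pinning down the additive constant via \eqref{dKP:asym} makes explicit a step the paper leaves implicit.
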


\begin{proof}
Let us  prove \eqref{dKP:alg3b}.
We act $\partial_X^{-1}$ on the both sides of
$\llb\t K_l,\t\sigma_r\rlb=l\,\t K_{l+r-2}.$
As a result, the l.h.s reads
\begin{align*}
\partial_X^{-1}\llb\t K_l,\t\sigma_r\rlb=\partial_X^{-1}(\partial_X\t\gamma_l'\t\sigma_r-\t\sigma_r'\partial_X\t\gamma_l)=\mathrm{grad}\langle\t\gamma_l,
\t\sigma_r\rangle
\end{align*}
due to \eqref{ndKP:Noeth}, $\t\gamma_l'=\t\gamma_l'^*$ and Lemma \ref{lem:Hamilt1}, and the r.h.s gives
\begin{align*}
\partial_X^{-1}l\,\t K_{l+r-2}=l\,\t\gamma_{l+r-2}.
\end{align*}
Thus on the potential level we have
\begin{align*}
\langle\t\gamma_l,\t\sigma_r\rangle=l\,\c H_{l+r-2}
\end{align*}
This is nothing but the relation \eqref{dKP:alg3b}. \eqref{dKP:alg3a} and \eqref{dKP:alg3c} can be proved in the same way from
$\llb\t K_l,\t K_r\rlb=0$ and $\llb\t\sigma_l,\t\sigma_r\rlb=(l-r)\t\sigma_{l+r-2}$.
We skip details and finish the proof.
\end{proof}

\begin{thm}\label{thm:alg4}
Each equation
\begin{equation}
U_{T_s}=\t K_s(U)
\label{thm:alg41}
\end{equation}
in the isospectral dKP hierarchy \eqref{idKP:hierar1} has two sets of conserved quantities
\begin{align}\label{idKP:cq}
\{\c H_l(U)\},~~\{\c I_r^s(U,T_s)=sT_s\c H_{s+r-2}(U)+\c J_r(U)\}
\end{align}
and they span a Lie algebra with basic structure
\bse\label{dKP:alg4}
\begin{align}
&\lpb\c H_l,\c H_r\rpb=0,\label{dKP:alg4a}\\
&\lpb\c H_l,\c I_r^s\rpb=l\,\c H_{l+r-2},\label{dKP:alg4b}\\
&\lpb\c I_l^s,\c I_r^s\rpb=(l-r)\c I_{l+r-2}^s,\label{dKP:alg4c}
\end{align}
\ese
where $l,r,s\geq 1$ and we set $\c H_0(U)=\c I_0^s(U)=0$. Equation \eqref{dKP:alg4a} means conserved quantities $\{\c H_l(U)\}$ are in involution.
\end{thm}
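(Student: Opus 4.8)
The plan is to transfer the algebra of flows from Theorem~\ref{thm:alg1} and the symmetry structure of Theorem~\ref{thm:alg2} down to the potential level using the Noether operator $\partial_X$, exactly as was done for the Hamiltonians in Theorem~\ref{thm:alg3}. First I would verify that each $\c H_l$ and each $\c I_r^s$ is indeed a conserved quantity of \eqref{thm:alg41}. For $\{\c H_l\}$ this is immediate: by Proposition~\ref{prop:cq1}, since $\t\gamma_l$ is a conserved covariant of $U_{T_s}=\t K_s$ (which follows from $\llb\t K_l,\t K_s\rlb=0$ together with the Noether property \eqref{idKP:Noeth}, exactly the reasoning used to turn symmetries into conserved covariants), its potential $\c H_l$ is a conserved quantity. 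For $\c I_r^s$ one computes $\partial_{T_s}\c I_r^s+\lpb\c I_r^s,\c H_s\rpb$ and checks it vanishes: the explicit $T_s$-dependence contributes $s\c H_{s+r-2}$, while the bracket term, unwound via \eqref{def:Poiss2} and Theorem~\ref{thm:alg3}, contributes $sT_s\{\c H_{s+r-2},\c H_s\}_\theta+\lpb\c J_r,\c H_s\rpb=0-s\c H_{s+r-2}$, using \eqref{dKP:alg3a} and \eqref{dKP:alg3c} with $\lpb\c J_r,\c H_s\rpb=-\lpb\c H_s,\c J_r\rpb=-s\c H_{s+r-2}$. The two contributions cancel, so $\c I_r^s$ is conserved.

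Next I would establish the Lie algebra \eqref{dKP:alg4}. Relation \eqref{dKP:alg4a} is just \eqref{dKP:alg3a}. For \eqref{dKP:alg4b} and \eqref{dKP:alg4c} I would expand the brackets bilinearly using the definition of $\c I_r^s$ and Theorem~\ref{thm:alg3}. Concretely, $\lpb\c H_l,\c I_r^s\rpb=sT_s\lpb\c H_l,\c H_{s+r-2}\rpb+\lpb\c H_l,\c J_r\rpb=0+l\,\c H_{l+r-2}$ by \eqref{dKP:alg3a} and \eqref{dKP:alg3b}, which is \eqref{dKP:alg4b} since the right-hand side of \eqref{dKP:alg4b} carries no $\c I$. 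For \eqref{dKP:alg4c} one has
\begin{align*}
\lpb\c I_l^s,\c I_r^s\rpb={}&s^2T_s^2\lpb\c H_{s+l-2},\c H_{s+r-2}\rpb+sT_s\lpb\c H_{s+l-2},\c J_r\rpb\\
&-sT_s\lpb\c H_{s+r-2},\c J_l\rpb+\lpb\c J_l,\c J_r\rpb,
\end{align*}
where I have used skew-symmetry of $\lpb\cdot,\cdot\rpb$ (which follows from $\langle f,\partial_Xg\rangle=-\langle\partial_Xf,g\rangle$) to rewrite $\lpb\c J_r,\c H_{s+l-2}\rpb=-\lpb\c H_{s+l-2},\c J_r\rpb$. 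The first term vanishes by \eqref{dKP:alg3a}; the middle two give $sT_s(s+l-2)\c H_{2s+l+r-4}-sT_s(s+r-2)\c H_{2s+l+r-4}=sT_s(l-r)\c H_{2s+l+r-4}$ by \eqref{dKP:alg3b}; the last gives $(l-r)\c J_{l+r-2}$ by \eqref{dKP:alg3c}. Recognizing that $sT_s\c H_{s+(l+r-2)-2}+\c J_{l+r-2}=\c I_{l+r-2}^s$, the sum equals $(l-r)\c I_{l+r-2}^s$, which is \eqref{dKP:alg4c}. The edge cases where indices drop to $0$ are handled by the convention $\c H_0=\c I_0^s=0$, consistent with $\t K_0=\t\sigma_0=0$.

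I expect the main obstacle to be bookkeeping rather than anything structural: one must be careful that the Poisson bracket $\lpb\cdot,\cdot\rpb$ of \eqref{def:Poiss2} really is skew-symmetric (needed repeatedly above) and that it satisfies the Jacobi identity, so that Theorem~\ref{thm:alg3} genuinely asserts a Lie algebra structure whose relations can be manipulated as linear identities among the $\c H$'s and $\c J$'s; this in turn rests on $\partial_X$ being implectic, established in the proof of Theorem~\ref{thm:Hamilt}. A secondary subtlety is the verification that $\c I_r^s$ is conserved: one must remember that $\c I_r^s$ depends \emph{explicitly} on $T_s$, so the conserved-quantity condition \eqref{def:cq} includes the $\partial_{T_s}$ term, and it is precisely the $l=s$ specialization appearing there that forces the cancellation. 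Once these points are in place, the remaining computations are the routine bilinear expansions sketched above, and the final sentence of the theorem — that $\{\c H_l\}$ are in involution — is simply a restatement of \eqref{dKP:alg4a}.
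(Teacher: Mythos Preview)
Your proposal is correct and, for the Lie-algebra relations \eqref{dKP:alg4}, proceeds exactly as the paper does: expand $\c I_r^s=sT_s\c H_{s+r-2}+\c J_r$ bilinearly and invoke the relations of Theorem~\ref{thm:alg3}. (Minor slip: in your conservedness check for $\c I_r^s$ you cite \eqref{dKP:alg3c} where you actually use \eqref{dKP:alg3b}.)

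The one place where you diverge from the paper is in showing that $\c I_r^s$ is conserved. The paper argues structurally: $\t\tau_r^s$ is a symmetry of \eqref{thm:alg41} by Theorem~\ref{thm:alg2}, $\partial_X$ is a Noether operator by Corollary~\ref{cor:Noeth}, hence $\partial_X^{-1}\t\tau_r^s=sT_s\t\gamma_{s+r-2}+\t\omega_r$ is a conserved covariant, and Proposition~\ref{prop:cq1} upgrades its potential $\c I_r^s$ to a conserved quantity. You instead verify \eqref{def:cq} directly at the level of the Poisson algebra, computing $\partial_{T_s}\c I_r^s+\lpb\c I_r^s,\c H_s\rpb=0$ from Theorem~\ref{thm:alg3}. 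Both routes are short and valid; the paper's has the advantage of reusing Theorem~\ref{thm:alg2} wholesale, while yours is self-contained once Theorem~\ref{thm:alg3} is in hand and avoids appealing to the symmetry $\t\tau_r^s$ at all.
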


\begin{proof}
Since $\{\t K_l(U)\}$ and $\{\t\tau_r^s(U,T_s)\}$ are symmetries and $\partial_X$ is a Noether operator of  the equation \eqref{thm:alg41},
both $\{\t\gamma_l(U)=\partial_X^{-1}\t K_l(U)\}$ and $\{\t\vartheta_r(U,T_s)=\partial_X^{-1}\t\tau_r^s(U,T_s)\}$ provide conserved covariants 
for the equation \eqref{thm:alg41}.
Thus, $\{\c H_l(U)\}$ and $\{\c I_r^s(U,T_s)=sT_s\c H_{s+r-2}(U)+\c J_r(U)\}$ are two sets of conserved quantities for \eqref{thm:alg41}
in the light of Proposition \ref{prop:cq1}.
Relations \eqref{dKP:alg4} can be derived from Theorem \ref{thm:alg3} via some combinations, e.g.,
\begin{align*}
\lpb\c H_l,\c I_r^s\rpb=\lpb\c H_l,sT_s\c H_{s+r-2}+\c J_r\rpb=sT_s\lpb\c H_l,\c H_{s+r-2}\rpb+\lpb\c H_l,\c J_r\rpb=l\,\c H_{l+r-2}.
\end{align*}
We finish the proof.
\end{proof}

\section{Conclusions}\label{sec:concl}

In the paper we have investigated integrability properties for the dKP hierarchy, including symmetries, Hamiltonian structures and conserved quantities.
We obtained four Lie algebras which respectively composed by
the flows $\{\t K_l(U)\}$ and $\{\t\sigma_r(U)\}$,
symmetries $\{\t K_l(U)\}$ and $\{\t\tau_r^s(U,T_s)\}$,
Hamiltonians $\{\c H_l(U)\}$ and $\{\c J_r(U)\}$,
and conserved quantities $\{\c H_l(U)\}$ and $\{\c I_r^s(U,T_s)\}$.
The key starting point is to find a Lax triad for constructing master symmetry and non-isospectral dKP hierarchy.
Lax triads also provided simple representations \eqref{dKP:hierar2} for the isospectral and non-isospectral dKP flows.
Besides, the relations of the master symmetry, symmetries, Noether operator and conserved covariants
also played main roles in deriving integrability properties in the paper.
The obtained Lie algebras have the same centerless Kac-Moody-Virasoro structure
as in the normal KP case (compared with the collection results given in \cite{FHTZ}).
This is because the Lax triads in the two cases have same structures.
Finally, we note that in \cite{FHTZ}  exact continuum limits were described for the
KP hierarchy and semi-discrete KP (also known as D$\Delta$KP) hierarchy together with their integrability characteristics.
There is also a semi-discrete dKP hierarchy \cite{Yu},
but so far its integrability properties and the connection (eg. continuum limit) with the continuous dKP hierarchy are not clear.

\section*{Acknowledgments}
This project is  supported by the NSF of China
(No. 11071157), the SRF of the DPHE of China (No. 20113108110002) and
the Project of ``First-class Discipline of Universities in Shanghai''.

\end{document}